\theoremstyle{plain}
\newtheorem{lemma}{Lemma}
\newtheorem{prop}{Proposition}
\theoremstyle{remark}
\newtheorem{case}{Case}[prop] 
\title{Reversing Imperative Parallel Programs}
\author{
James Hoey \qquad\qquad Irek Ulidowski
\institute{Department of Informatics\\
University of Leicester, UK}
\email{\quad jbh11@leicester.ac.uk \quad\qquad iu3@leicester.ac.uk}
\and
Shoji Yuen \institute{Graduate School of Information Science \\ Nagoya University, Japan}
\email{yuen@is.nagoya-u.ac.jp}
}
\begin{document}
\maketitle

\begin{abstract}
We propose an approach and a subsequent extension for reversing imperative programs. Firstly, we produce both an augmented version and a corresponding inverted version of the original program. Augmentation saves reversal information into an auxiliary data store, maintaining segregation between this and the program state, while never altering the data store in any other way than that of the original program. Inversion uses this information to revert the final program state to the state as it was before execution. We prove that augmentation and inversion work as intended, and illustrate our approach with several examples. We also suggest a modification to our first approach to support non-communicating parallelism. Execution interleaving introduces a number of challenges, each of which our extended approach considers. We define annotation and redefine inversion to use a sequence of statement identifiers, making the interleaving order deterministic in reverse.
\end{abstract}

\section{Introduction}
Reverse computation has been an active research area for a number of years. The ability to reverse execute, or invert, a program is desirable due to its potential applications. The relationship with the Landauer principle shows reverse computation to be a feasible solution for producing low power, energy efficient computation \cite{RL1961}. In this paper, we consider reverse computation within the setting of imperative programs. We first propose a state-saving approach for reversing such programs consisting of assignments, conditional statements and while loops. We display an example of this approach showing the execution can now be reversed, and we verify that this reversal is correct. Secondly, we discuss the challenges faced when introducing parallelism, as well as the required modifications to our first approach in order to support it. The formal definition and accompanying example demonstrate the reversal of a parallel program. Finally, we present correctness results for this modified approach. 

The most obvious approach to implementing program reversal is to record the entire program state before executing the program. Recording all of the initial variable values does allow immediate reversal to the original state, however suffers several setbacks, including not re-creating the intermediate program states, and the production of garbage data. We propose an approach that records the necessary information to reverse an execution step-by-step, re-creating intermediate steps faithfully allowing movement in both directions at any point. Any information we save as a result of this is used during inversion, meaning no garbage data is produced.

Inspired by the Reverse C Compiler (RCC) \cite{KP2014,CC1999}, our initial approach takes an original program and produces two versions. The first is the \emph{augmented version}, which becomes the program used for forward execution, and has the capability to save all information necessary for inversion, termed \emph{reversal information}. This is implemented via the function $aug$ that analyses the original program statement by statement, producing the augmented version. The execution of this version populates a collection of initially empty stacks, termed an auxiliary store $\delta$, with this reversal information. Consider the program shown in Figure \ref{original-p}, producing the \texttt{Nth} element of a Fibonacci-like sequence beginning with the values of \texttt{X} and \texttt{Y}. Let the initial state $\sigma$ consist of \texttt{X=4}, \texttt{Y=3}, \texttt{Z=0}, \texttt{N=5} and the initial auxiliary store $\delta$ consist of empty stacks. The execution of the augmented version (displayed later in Section \ref{sec:aug}, Figure \ref{aug-p}) under these stores results in the state $\sigma'$ where \texttt{X=11}, \texttt{Y=18}, \texttt{Z=7}, \texttt{N=2} and auxiliary store $\delta'$ containing reversal information detailed in Section \ref{sec:aug}. With this version now being used for forwards execution, it is crucial that the behaviour with respect to the program state is unchanged. Our first result ensures that if the state $\sigma'$ is produced via the original execution, then it must also be produced via the augmented execution.
 
\lstset{
numbers=left, 
numberstyle=\small, 
numbersep=2pt,
language=Pascal,
framexleftmargin=1pt}
\begin{figure}[t]
  \begin{minipage}[b]{0.49\linewidth}
   \centering
   {\small \begin{lstlisting}[xleftmargin=5.0ex,mathescape=true]
  $\textbf{if}$ $\texttt{X}$ > $\texttt{Y}$ $\textbf{then}$
    $\texttt{Z}$ = $\texttt{Y}$;
    $\texttt{Y}$ = $\texttt{X}$;
    $\texttt{X}$ = $\texttt{Z}$;
  $\textbf{else}$ 
    $\texttt{skip}$
  $\textbf{end}$

  $\textbf{while}$ $\texttt{N}$-2 > $\texttt{0}$ $\textbf{do}$
    $\texttt{Z}$ = $\texttt{X}$;
    $\texttt{X}$ = $\texttt{Y}$;
    $\texttt{Y}$ += $\texttt{Z}$;
    $\texttt{N}$ -= 1;
  $\textbf{end}$
\end{lstlisting} }
    \caption{Original program}
	\label{original-p}
  \end{minipage}
  \hspace{0.01cm}
  \begin{minipage}[b]{0.49\linewidth}
    \centering
    {\small \begin{lstlisting}[xleftmargin=5.0ex,mathescape=true]
  $\textbf{while}$ $\texttt{pop}$($\delta$($\texttt{W}$)) $\texttt{do}$
    $\texttt{N}$ +=1;
    $\texttt{Y}$ -= $\texttt{Z}$;
    $\texttt{X}$ = $\texttt{pop}$($\delta$($\texttt{X}$));
    $\texttt{Z}$ = $\texttt{pop}$($\delta$(Z));
  $\texttt{end}$

  $\textbf{if}$ $\texttt{pop}$($\delta$($\texttt{B}$)) $\textbf{then}$
    $\texttt{X}$ = $\texttt{pop}$($\delta$($\texttt{X}$));
    $\texttt{Y}$ = $\texttt{pop}$($\delta$($\texttt{Y}$));
    $\texttt{Z}$ = $\texttt{pop}$($\delta$($\texttt{Z}$));	
  $\textbf{else}$
    $\texttt{skip}$
  $\textbf{end}$
	\end{lstlisting} }
    \caption{Inverted program}
    \label{inv-p}
  \end{minipage}
\end{figure}

The second version, termed the \emph{inverted version}, is generated via the function $inv$. This version follows the inverted execution order of the original, containing a statement corresponding to each of those of the original. Each inverted statement will typically use information from the auxiliary store to revert all of the effects caused via execution of the original. Consider again the example in Figure \ref{original-p}. Application of $inv$ to this program produces the inverted version, shown in Figure \ref{inv-p}. Execution of this program under the stores $\sigma'$ and $\delta'$ produces the state $\sigma''$ where \texttt{X=4}, \texttt{Y=3}, \texttt{Z=0}, \texttt{N=5}, and the auxiliary store $\delta''$ containing only empty stacks. Our second result validates that $\sigma''$ = $\sigma$ and $\delta''$ = $\delta$, meaning the inversion has happened correctly and the initial program state has been restored. Doing so proves that the augmented program saves the required information, that the inverted program is capable of using this to restore the program state to exactly as it was before, and that augmentation produces no garbage data.  
 
In the second part of the paper, we define a modified approach, this time capable of supporting non-communicating parallelism \cite{HH2010}. Issues introduced such as a non-deterministic execution order make our previous approach insufficient without further state-saving. The \emph{interleaving order}, or order in which the statements are executed, now forms part of the reversal information, captured and stored at runtime. Storing this interleaving order makes the program deterministic in reverse, guaranteeing the execution of the inverted program always follows exactly the inverse execution order of the original. Modifications are made to the process of augmentation from our first approach, with all state-saving implemented at runtime via a set of modified operational semantics for forward execution. A similar reasoning is applied to the process of inversion, resulting in a modified set of operational semantics for reverse execution. These semantics are responsible for using the reversal information, including the interleaving order, to reverse the statements of the original program in exactly the opposite order. Finally the correctness results of our second approach are presented. 

The paper is organised as follows. Section \ref{sec:prog} introduces the programming language and its notion of program state, with the operational semantics given in Section \ref{sec:sos}. Section \ref{sec:aug} describes the process of augmentation and the information that must be saved, as well as proving the correctness of this augmentation. Section \ref{sec:inv} defines the process of inversion and again proves the correctness of this process. Section \ref{sec:update} introduces an updated approach capable of supporting parallel composition, as well as presenting the correctness results.  

\subsection{Related Work} \label{sec:relW}
Program inversion has been discussed for many years, including the work by Gries \cite{DG1981} and by Gl{\"{u}}ck and Kawabe \cite{RG2004,RG2005}. The Reverse C Compiler as described by Perumalla et al. \cite{KP2014,CC1999} is one example of a state-saving approach for the reversal of C programs. We relate very closely to this approach, but with differences including that we currently support a smaller language, and we record a while loop sequence in order to avoid modifying the behaviour of the original program (see Section \ref{sec:aug}). To the best of our knowledge, there is no formal proof of correctness of RCC, and so this is a major focus of our work. Our approach proposes the foundation from which a formally proved approach for a more complex language could emerge.  Other work has been produced on reverse computation used within Parallel Discrete Event Simulation (PDES), a simulation methodology capable of executing events speculatively \cite{RJ1990,DC2016}. The backstroke framework \cite{GV2011} and subsequent work on it by Schordan et al. \cite{MS2015,MS2016} relates slightly less closely to our work as it focuses on this application to PDES. Backstroke is capable of both a state-saving approach and a more advanced, path regeneration method for reverse computation. Other applications include to debugging, with examples being \cite{BB1999,HA1991}. Similarly to program inversion, the reversible programming language Janus, originally proposed in \cite{CL1986} requires additional information within the source code. Any program written in Janus is fully reversible, without the requirement for any control information to be recorded, but with a requirement for additional assertions that make the program deterministic in both directions \cite{TY2007,TY2008A}.  

\section{Programming Language and Program State} \label{sec:prog}
The programming language used for our first approach is similar to any \emph{while language}, particularly that of H\"{u}ttel \cite{HH2010}. This consists of destructive and constructive assignments, with the expression not containing the variable in question, or any side effects. Conditional statements and loops are also supported, implemented using both arithmetic and Boolean expressions.  Let the set of variables $\mathbb{V}$ be ranged over by \texttt{X}, \texttt{Y}, \texttt{Z} \ldots, the set of integers $\mathbb{Z}$ be ranged over by \texttt{l}, \texttt{m} and \texttt{n} and the set of Boolean values $\mathbb{B}$ be \{\texttt{T,F}\}. Also let \texttt{Cop} be the set of constructive assignment operators \{\texttt{+=,-=}\} with \texttt{cop} $\in$ \texttt{Cop}, and \texttt{Op} be the set of arithmetic operators \{\texttt{+,-}\} with \texttt{op} $\in$ \texttt{Op}.
\begin{alignat*}{2}
&\texttt{P}    &\hspace{.2cm}::=   \hspace{.4cm}  &\varepsilon ~|~ \texttt{S};~ \texttt{P}  \\
&\texttt{S}    &\hspace{.2cm}::=   \hspace{.4cm} &\texttt{skip} ~|~  \texttt{X = Exp} ~|~ \texttt{X Cop Exp} ~|~ \texttt{if B then P else P end} ~| \\
&\phantom{\texttt{S}}    &\phantom{::=   \hspace{.4cm}}  &\texttt{while B do P end}  \\
&\texttt{B}    &\hspace{.2cm}::=   \hspace{.4cm}  &\texttt{T} ~|~ \texttt{F} ~|~ \lnot\texttt{B}~|~ \texttt{(B)} ~|~ \texttt{Exp == Exp} ~|~ \texttt{Exp > Exp} ~|~  \texttt{B $\land$ B} \\
&\texttt{Exp}    &\hspace{.2cm}::=   \hspace{.4cm}  &\texttt{X} ~|~ \texttt{n} ~|~ \texttt{(Exp)} ~|~ \texttt{Exp Op Exp}   
\end{alignat*}
$\mathbb{P}$ is the set of programs, ranged over by \texttt{P}, \texttt{Q} and \texttt{R}. $\mathbb{S}$ is the set of statements, ranged over by \texttt{S}. Expressions \texttt{Exp} are ranged over by \texttt{a}, \texttt{a$_0$}, \texttt{a$_0'$}, \texttt{a$_1$}, \texttt{a$_1'$}, Boolean expressions \texttt{B} are ranged over by \texttt{b}, \texttt{b$'$}, \texttt{b$_0$}, \texttt{b$_1$} and expressions that can be either are ranged over by \texttt{ba}, \texttt{ba$_0$}, \texttt{ba$_0'$}, \texttt{ba$_1$}, \texttt{ba$_1'$}. 

The program state is represented via a data store $\sigma$, responsible for mapping each variable to the value it currently holds. A data store is represented as a set of pairs, with the first element of the pair being the variable name, and the second being its current value. A data store is represented formally as the partial function $ \sigma: \mathbb{V} \to \mathbb{Z} $. 

Such stores are manipulated using the following notation. Assuming \texttt{v} $\in \mathbb{Z}$, $\sigma$(\texttt{X}) returns the value currently associated to the variable \texttt{X}, while $\sigma$[\texttt{X} $\mapsto$ \texttt{v}] produces a store identical to $\sigma$, but with the variable \texttt{X} now holding the value \texttt{v}. 

Consider the store $\sigma$, consisting of two variables \texttt{X} and \texttt{Y}, with values \texttt{3} and \texttt{5} respectively, described as $\sigma$ = \{(\texttt{X}, \texttt{3}), (\texttt{Y},\texttt{5})\}. The statement $\sigma$(\texttt{X}) returns \texttt{3}, and $\sigma$[\texttt{X} $\mapsto$ \texttt{10}] results in the store \{(\texttt{X},\texttt{10}), (\texttt{Y},\texttt{5})\}.

\section{Structured Operational Semantics} \label{sec:sos}
This section defines the Structured Operational Semantics (SOS) of the programming language described above. These are defined in the traditional way, following closely with those of H{\"u}ttel \cite{HH2010}. The parameter $\delta$, representing the auxiliary store, is not strictly necessary at this point, but is required later and included here for consistency.
\subsection{Arithmetic Statements}
Let \texttt{v} $\in \mathbb{Z}$ and recall \texttt{op} $\in$ \texttt{Op}.
{ \small \begin{align*}
&\frac{}{(\texttt{X},\sigma,\delta) \rightarrow (\sigma(\texttt{X}),\sigma,\delta)} \hspace{.2cm}
\frac{\texttt{v }=\texttt{ n op m}}{(\texttt{n op m},\sigma,\delta) \rightarrow (\texttt{v},\sigma,\delta)} \hspace{.2cm} \frac{}{(\texttt{(v)},\sigma,\delta) \rightarrow (\texttt{v},\sigma,\delta)} \hspace{.2cm} \frac{(\texttt{a$_0$},\sigma,\delta) \rightarrow (\texttt{a$_0'$},\sigma',\delta')}{(\texttt{(a$_0$)},\sigma,\delta) \rightarrow (\texttt{(a$_0'$)},\sigma',\delta')} \\[10pt]
&\frac{(\texttt{a$_0$},\sigma,\delta) \rightarrow (\texttt{a$_0'$},\sigma',\delta')}{(\texttt{a$_0$} \texttt{ op } \texttt{a$_1$},\sigma,\delta) \rightarrow (\texttt{a$_0'$} \texttt{ op } \texttt{a$_1$},\sigma',\delta')} \hspace{.4cm}
\frac{(\texttt{a$_1$},\sigma,\delta) \rightarrow (\texttt{a$_1'$},\sigma',\delta')}{(\texttt{a$_0$ op } \texttt{a$_1$},\sigma,\delta) \rightarrow (\texttt{a$_0$ op } \texttt{a$_1'$},\sigma',\delta')}
\end{align*} }%
\subsection{Boolean Expressions}
Let \texttt{bop} $\in$ \{\texttt{>,==}\} if used between two arithmetic expressions or \texttt{bop} $\in$ \{\texttt{$\land$,==}\} if used between two Boolean expressions. 
{ \small \begin{align*}
&\frac{}{(\neg \texttt{T},\sigma,\delta) \rightarrow (\texttt{F},\sigma,\delta)} \hspace{.2cm}  \frac{}{(\neg \texttt{F},\sigma,\delta) \rightarrow (\texttt{T},\sigma,\delta)} \hspace{.2cm} \frac{(\texttt{b},\sigma,\delta) \rightarrow (\texttt{b$'$},\sigma',\delta')}{(\neg \texttt{b},\sigma,\delta) \rightarrow (\neg \texttt{b$'$}, \sigma',\delta')} \hspace{.2cm}
\frac{\texttt{ba$_2$} = \texttt{ba$_0$ bop ba$_1$}}{(\texttt{ba$_0$} \texttt{ bop } \texttt{ba$_1$},\sigma,\delta) \rightarrow (\texttt{ba$_2$},\sigma,\delta)} \\[10pt]
&\frac{(\texttt{ba$_0$},\sigma,\delta) \rightarrow (\texttt{ba$_0'$},\sigma',\delta')}{(\texttt{ba$_0$} \texttt{ bop } \texttt{ba$_1$},\sigma,\delta) \rightarrow (\texttt{ba$_0'$} \texttt{ bop } \texttt{ba$_1$},\sigma',\delta')} \hspace{.5cm}
\frac{(\texttt{ba$_1$},\sigma,\delta) \rightarrow ( \texttt{ba$_1'$},\sigma',\delta')}{(\texttt{ba$_0$} \texttt{ bop } \texttt{ba$_1$},\sigma,\delta) \rightarrow (\texttt{ba$_0$} \texttt{ bop } \texttt{ba$_1'$},\sigma',\delta')}
\end{align*} }%
\subsection{Program Statements}
Let \texttt{v} $\in \mathbb{Z}$ and recall \texttt{cop} $\in$ \texttt{Cop}. Let \texttt{op} be \texttt{+} if \texttt{cop} = \texttt{+=}, otherwise let \texttt{op} be \texttt{-}.
{\small \begin{align*}
&\text{[Skip]} \hspace{.2cm} \frac{}{(\texttt{skip;P},\sigma,\delta) \rightarrow (\texttt{P},\sigma,\delta)} \hspace{.8cm} \text{[Seq]} \hspace{.2cm}  \frac{(\texttt{S},\sigma,\delta) \rightarrow (\texttt{S$'$},\sigma',\delta')}{(\texttt{S;P},\sigma,\delta) \rightarrow (\texttt{S$'$;P},\sigma',\delta')} \\[8pt]
&\text{[DA1]} \hspace{.1cm} \frac{}{(\texttt{X = v},\sigma,\delta) \rightarrow (\texttt{skip}, \sigma[\texttt{X} \mapsto \texttt{v}],\delta)} \hspace{.8cm} \text{[DA2]} \hspace{.1cm} \frac{(\texttt{a},\sigma,\delta) \rightarrow (\texttt{a$'$},\sigma',\delta')}{(\texttt{X = a},\sigma,\delta) \rightarrow (\texttt{X = a$'$},\sigma',\delta')} \\[8pt]
&\text{[CA1]} \hspace{.1cm} \frac{}{(\texttt{X cop v}, \sigma,\delta) \rightarrow (\texttt{skip},\sigma[\texttt{X} \mapsto \sigma(\texttt{X})\texttt{ op v}],\delta)} \hspace{.8cm} \text{[CA2]} \hspace{.1cm}  \frac{(\texttt{a},\sigma,\delta) \rightarrow (\texttt{a$'$},\sigma',\delta')}{(\texttt{X cop a},\sigma,\delta) \rightarrow (\texttt{X cop a$'$},\sigma',\delta')}  \\[8pt]
&\text{[C1]} \hspace{.2cm} \frac{}{(\texttt{if T then P else Q end},\sigma,\delta) \rightarrow (\texttt{P},\sigma,\delta)} \hspace{.4cm} \text{[C2]} \hspace{.2cm}  \frac{}{(\texttt{if F then P else Q end},\sigma,\delta) \rightarrow (\texttt{Q},\sigma,\delta)} \\[8pt]
&\text{[C3]} \hspace{.2cm} \frac{(\texttt{b},\sigma,\delta) \rightarrow (\texttt{b$'$},\sigma',\delta')}{(\texttt{if b then P else Q end},\sigma,\delta) \rightarrow (\texttt{if b$'$ then P else Q end},\sigma',\delta')} \\[8pt]
&\text{[Wh]} \hspace{.2cm} \frac{}{(\texttt{P},\sigma,\delta) \rightarrow (\texttt{if b then Q;P else skip end},\sigma,\delta)} \hspace{.5cm} \text{where } \texttt{P} = \texttt{while b do Q end}
\end{align*}}%
\section{Augmentation} \label{sec:aug}
The first step of our first approach is to generate the \emph{augmented version} through a process termed augmentation. This process takes each statement of the original program in succession, and returns a semantically equivalent (with respect to the data store) code fragment containing any required state-saving operations. These fragments are then combined to produce the augmented version.  

The information required to be saved depends on the type of statement. Destructive assignments discard the old value of a variable, meaning it must be saved. Constructive assignments do not suffer this problem meaning they are reversible without state-saving. Due to no guarantee that a condition is invariant, conditional statements must save control information indicating which branch was executed. While loops not having a fixed number of iterations means the number of times the loop should be inverted is unknown. Therefore a sequence of Booleans representing the while loop is saved.  

Saving the result of evaluating conditional statements and while loops removes the burden of re-evaluating these expressions during inversion, unlike the reversible programming language Janus that does require this. In an effort to ensure that the state-saving does not affect the behaviour of the program (w.r.t. the data store), all reversal information is stored separately in an auxiliary data store.  

\subsection{Auxiliary Data Store} \label{sec:auxStore}
Recall that $\mathbb{V}$ is the set of program variable names, and now let both \texttt{B} and \texttt{W} be reserved keywords that cannot appear within this set. An auxiliary data store $\delta$ is a set of stacks, consisting of one self-named stack for each program variable within $\mathbb{V}$, one stack \texttt{B} for all conditional statements and one stack \texttt{W} for all while loops. More formally, $\delta: (\mathbb{V} \to \mathbb{X}) \cup (\{\texttt{B,W}\} \cup \mathbb{B'})$, where $\mathbb{X}$ is the set of stacks of integers and $\mathbb{B'}$ is the set of stacks of Booleans. Auxiliary stores will be represented as a set of pairs. Each pair represents a stack, with the first element being the stack name and the second element being the sequence of its elements. The order of this sequence reflects that of the stack, with the left-most element being the head of the stack. Consider a program consisting of one variable \texttt{X} (initially \texttt{1}) destructively assigned twice (to \texttt{3} and \texttt{5}), one conditional statement that evaluates to \texttt{T} and a while loop with one iteration. The final auxiliary store would be \{(\texttt{X},\{\texttt{3,1}\}), (\texttt{B},\{\texttt{T}\}), (\texttt{W},\{\texttt{T,F}\})\}. 

The stacks on $\delta$ are manipulated in the traditional manner \cite{HH2010}, using \texttt{push} and \texttt{pop} operations introduced via augmentation. The notation $\delta$[\texttt{v} $\mapsto$ \texttt{X}] and \texttt{push(v,X)} both represent pushing the value \texttt{v} to the stack \texttt{X}, while $\delta$[\texttt{X}] and \texttt{pop(X)} represent popping the stack \texttt{X}. Further notation includes $\delta$(\texttt{S}) that returns the stack named \texttt{S}, \texttt{v:S} that indicates a stack with head \texttt{v} and tail \texttt{S}, and $\delta$[\texttt{X/X$'$}] that states the stack \texttt{X} is replaced by \texttt{X$'$}. The SOS rules are defined, where \texttt{v} $\in \mathbb{Z} \cup \mathbb{B}$.%

{\small \begin{align*}
&\text{[Pop]} \hspace{.2cm} \frac{\texttt{$\delta$(X) = v:X$'$}}{(\texttt{pop}(\delta(\texttt{X})),\sigma,\delta) \rightarrow (\texttt{v},\sigma,\delta[\texttt{X/X$'$}])} \hspace{.8cm} \text{[Push1]} \hspace{.2cm} \frac{}{(\texttt{push(v,}\delta(\texttt{X})\texttt{)},\sigma,\delta) \rightarrow (\texttt{skip},\sigma,\delta[\texttt{v} \mapsto \texttt{X}])} \\[10pt]
&\text{[Push2]} \hspace{.2cm} \frac{(\texttt{ba},\sigma,\delta) \rightarrow (\texttt{ba$'$},\sigma',\delta')}{(\texttt{push(}\texttt{ba}\texttt{,}\delta(\texttt{X})\texttt{)},\sigma,\delta) \rightarrow (\texttt{push(}\texttt{ba$'$}\texttt{,}\delta(\texttt{X})\texttt{)},\sigma',\delta')} 
\end{align*} }%
We are now ready to introduce the function that performs the augmentation. 

\subsection{Augmentation Function}
Let $\mathbb{\hat{P}}$ be the set of augmented programs. The function $aug: \mathbb{P} \rightarrow \mathbb{\hat{P}}$ takes the original program and recursively applies the function $a: \mathbb{S} \rightarrow \mathbb{\hat{P}}$ to each statement, producing its augmented version.

Destructive assignments are augmented into two statements, one to push the old value of the variable to its self-named stack on $\delta$, and a second to perform the assignment (see \ref{aass}). Constructive assignments are left unchanged due to their reversibility (see \ref{aop}). Conditional statements have each branch recursively augmented, as well as extended with an operation that stores a Boolean indicating whether the true or false branch was executed (see \ref{Per-co}). As such, $aug$ and $a$ are now defined, where \texttt{cop} $\in$ \texttt{Cop}. 
\begin{align}
aug(\varepsilon) &= \varepsilon \label{augbase}\\
aug(\texttt{S;P}) &= a(\texttt{S}); \text{ } aug(\texttt{P})\label{aug-gen} \\ \nonumber \\
a(\texttt{skip}) &= \texttt{skip} \\
\begin{split} a(\texttt{X = a}) &= \texttt{push(}\sigma(\texttt{X}),\delta(\texttt{X})\texttt{);}  ~\texttt{X = a}\end{split} \label{aass}\\
\begin{split} a(\texttt{X cop a}) &= \texttt{X cop a} \end{split} \label{aop}\\ 
\begin{split}a(\texttt{if b then P else Q end}) &=  \texttt{if b then } aug(\texttt{P}); \texttt{ push(}\texttt{T},\delta(\texttt{B})\texttt{)}   \\ &\phantom{== } \texttt{else } aug(\texttt{Q}); \texttt{ push(}\texttt{F},\delta(\texttt{B})\texttt{) } \texttt{end} \end{split} \label{Per-co}
\end{align}

The traditional approach of handling while loops by initialising a counter and incrementing it for each iteration is not used here due to its adverse effects on the behaviour of the program w.r.t. the data store. While loops are instead augmented to save a sequence of Booleans representing its execution. Generating the sequence in the intuitive way (of a \texttt{T} for each iteration and finally an \texttt{F}) and storing this onto a traditional stack will require the sequence to be manipulated before being used. Such manipulation is both difficult, due to ambiguities within such sequences, and avoidable, by storing a usable order to begin with.

The desired order is that of the intuitive approach, but with any opening \texttt{T} switched with its corresponding closing \texttt{F}, while maintaining any nested \texttt{T} elements. This sequence can be generated provided we can distinguish between the first iteration of a loop and any other. The first iteration now requires an \texttt{F}, while any subsequent iteration (including the unsuccessful last iteration) requires a \texttt{T} (see \ref{aug-while-2}). 
\begin{align}
\begin{split} a(\texttt{while b do P end}) &=  \texttt{if b then} \\ &\phantom{==== } \texttt{push(F},\delta(\texttt{W})\texttt{)}; \text{ } aug(\texttt{P}); \\ &\phantom{==== } \texttt{while b do} \\ &\phantom{====== } \texttt{push(T},\delta(\texttt{W})\texttt{)}; \text{ } aug(\texttt{P}) \\ &\phantom{==== } \texttt{end; }  \texttt{push(T},\delta(\texttt{W})\texttt{)} \\ &\phantom{== } \texttt{else } \texttt{push(}\texttt{F},\delta(\texttt{W})\texttt{)} \texttt{ end}  \end{split} \label{aug-while-2}
\end{align}

We now return to our example discussing Figure \ref{original-p}. The augmented version of this program is shown in Figure \ref{aug-p}. The destructive assignment of \texttt{Z} on line 2 of Figure \ref{original-p} corresponds to line 2 of Figure \ref{aug-p}, where the \texttt{push} statement is used to first save the old value. Lines 5 and 7 of Figure \ref{aug-p} contain inserted operations to save the result of evaluating the conditional statement, while lines 10, 16, 22 and 24 are inserted commands to save the sequence of Boolean values representing the execution of the while loop. Execution of this program under the initial stores $\sigma$ = \{(\texttt{X},\texttt{4}), (\texttt{Y},\texttt{3}), (\texttt{Z},\texttt{0}), (\texttt{N},\texttt{5})\} and $\delta$ = \{(\texttt{X},\{\}), (\texttt{Y},\{\}), (\texttt{Z},\{\}), (\texttt{N},\{\}), (\texttt{B},\{\}), (\texttt{W},\{\})\}, produces the final stores $\sigma'$ = \{(\texttt{X},\texttt{11}), (\texttt{Y},\texttt{18}), (\texttt{Z},\texttt{7}), (\texttt{N},\texttt{2})\} and $\delta'$ = \{(\texttt{X},\{\texttt{7,4,3,4}\}), (\texttt{Y},\{\texttt{3}\}), (\texttt{Z},\{\texttt{4,3,3,0}\}), (\texttt{N},\{\}), (\texttt{B},\{\texttt{T}\}), (\texttt{W},\{\texttt{T,T,T,F}\})\}. The two final stores now contain all of the necessary information for reversal. 
\begin{figure}[t]
  \begin{minipage}[b]{0.49\linewidth}
    \centering
  {\small  \begin{lstlisting}[xleftmargin=5.0ex,mathescape=true,firstnumber=1,numbers=left]
 $\textbf{if}$ $\texttt{X}$ > $\texttt{Y}$ $\textbf{then}$
   $\texttt{push}$($\sigma$($\texttt{Z}$),$\delta$($\texttt{Z}$)); $\texttt{Z}$ = $\texttt{Y}$;
   $\texttt{push}$($\sigma$($\texttt{Y}$),$\delta$($\texttt{Y}$)); $\texttt{Y}$ = $\texttt{X}$;
   $\texttt{push}$($\sigma$($\texttt{X}$),$\delta$($\texttt{X}$)); $\texttt{X}$ = $\texttt{Z}$;
   $\texttt{push}$($\texttt{T}$,$\delta$($\texttt{B}$))
 $\textbf{else}$
   $\texttt{skip;push}$($\texttt{F}$,$\delta$($\texttt{B}$))
 $\textbf{end}$
 $\textbf{if}$ $\texttt{N}$-2 > 0 $\textbf{then}$
   $\texttt{push}$($\texttt{F}$,$\delta$($\texttt{W}$));
   $\texttt{push}$($\sigma$($\texttt{Z}$),$\delta$($\texttt{Z}$)); $\texttt{Z}$ = $\texttt{X}$;
   $\texttt{push}$($\sigma$($\texttt{X}$),$\delta$($\texttt{X}$)); $\texttt{X}$ = $\texttt{Y}$;
    \end{lstlisting}  }
  \end{minipage}
  \hspace{0.01cm}
  \begin{minipage}[b]{0.49\linewidth}
    \centering
  {\small  \begin{lstlisting}[xleftmargin=5.0ex,mathescape=true,firstnumber=13,numbers=left]
   $\texttt{Y}$ += $\texttt{Z}$;
   $\texttt{N}$ -= 1;
   $\textbf{while}$ $\texttt{N}$-2 > 0 $\textbf{do}$
      $\texttt{push}$($\texttt{T}$,$\delta$($\texttt{W}$));
      $\texttt{push}$($\sigma$($\texttt{Z}$),$\delta$($\texttt{Z}$)); $\texttt{Z}$ = $\texttt{X}$;
      $\texttt{push}$($\sigma$($\texttt{X}$),$\delta$($\texttt{X}$)); $\texttt{X}$ = $\texttt{Y}$;
      $\texttt{Y}$ += $\texttt{Z}$;
      $\texttt{N}$ -= 1
   $\textbf{end}$
   $\texttt{push}$($\texttt{T}$,$\delta$($\texttt{W}$))
 $\textbf{else}$
   $\texttt{push}$($\texttt{F}$,$\delta$($\texttt{W}$));$\textbf{ end}$
	\end{lstlisting} }
  \end{minipage}
  \caption{Augmented Version of the Program in Figure \ref{original-p}}
  \label{aug-p}
\end{figure}

We are now ready to state our first result. Firstly, Proposition \ref{prop1} states that if the execution of an original program terminates, then the execution of the augmented version of that program also terminates (where a program terminates if its execution finishes with the configuration $(\texttt{skip},\sigma^*,\delta^*)$ for some $\sigma^*$ and $\delta^*$). Secondly, Proposition \ref{prop1} states that augmentation produces an augmented version that modifies the data store $\sigma$ in exactly the same way as that of the original program to $\sigma'$, while also populating the auxiliary store $\delta$ with reversal information producing $\delta'$. 

\begin{prop} \label{prop1}
Let \textup{\texttt{P}} be a program that does not interact with the auxiliary store, $\sigma$ be an arbitrary initial data store and $\delta$ be an arbitrary initial auxiliary data store. Firstly, if $(\textup{\texttt{P}},\sigma,\delta) \rightarrow^* (\textup{\texttt{skip}},\sigma',\delta'')$, for some $\sigma'$ and $\delta''$, then $(aug(\textup{\texttt{P}}),\sigma,\delta) \rightarrow^* (\textup{\texttt{skip}},\sigma'',\delta''')$ for some $\sigma''$ and $\delta'''$. Secondly, if $(\textup{\texttt{P}},\sigma,\delta) \rightarrow^* (\textup{\texttt{skip}},\sigma',\delta)$, for some $\sigma'$, then $(aug(\textup{\texttt{P}}),\sigma,\delta) \rightarrow^* (\textup{\texttt{skip}},\sigma',\delta')$ for some $\delta'$. 
\end{prop}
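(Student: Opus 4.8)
The plan is to prove both parts by structural induction on the program \texttt{P}, exploiting the single observation that underlies the whole result: every operation that \(aug\) inserts is a \texttt{push}, and by rules [Push1] and [Push2] (together with the arithmetic rules, which only read \(\sigma\)) a \texttt{push} modifies the auxiliary store \(\delta\) while leaving \(\sigma\) untouched. Consequently the data store \(\sigma\) can only be affected by the genuine program statements that survive augmentation, so any discrepancy in \(\sigma\)-behaviour between \texttt{P} and \(aug(\texttt{P})\) must originate in the one place where \(aug\) actually changes program structure, namely the while loop. I would record this as a preliminary remark, together with the note that since \texttt{P} never touches \(\delta\) the original derivation leaves \(\delta\) fixed, which is consistent with the two hypotheses.

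For the base case \eqref{augbase} there is nothing to do. For the inductive step I would use \eqref{aug-gen} to write \(aug(\texttt{S;P}) = a(\texttt{S});\, aug(\texttt{P})\) and argue on the leading statement \texttt{S}, dispatching the residual tail \(aug(\texttt{P})\) by the induction hypothesis in every case. The \texttt{skip} and constructive-assignment cases are immediate since \(a\) leaves them unchanged. For a destructive assignment \texttt{X = a}, rule \eqref{aass} prepends a \texttt{push} of \(\sigma(\texttt{X})\); by the preliminary remark this leaves \(\sigma\) fixed and merely records the old value of \texttt{X}, after which \texttt{X = a} executes exactly as in the original, using that expressions carry no side effects and do not read \(\delta\). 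For a conditional \eqref{Per-co}, the guard \texttt{b} contains no auxiliary-store access and has no side effects, so it evaluates identically and the same branch is selected; the induction hypothesis applied to the augmented branch gives the same \(\sigma\)-effect, and the inserted \texttt{push} onto \(\delta(\texttt{B})\) only touches \(\delta\).

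The substantive case, and the one I expect to be the main obstacle, is the while loop \texttt{while b do Q end}, because \eqref{aug-while-2} does not merely decorate the loop but restructures it into an outer conditional that peels off the first iteration, followed by a residual loop \(R = \texttt{while b do push(T,}\delta(\texttt{W})\texttt{);\,}aug(\texttt{Q})\texttt{ end; push(T,}\delta(\texttt{W})\texttt{)}\). Since \(R\) is not itself of the form \(aug(\texttt{while b do Q end})\), the outer structural hypothesis does not apply to it directly. I would therefore introduce an inner induction on the number \(k\) of iterations the original loop performs, which is finite precisely because the original execution is assumed to terminate, so that the termination hypothesis is exactly what makes this inner induction well founded. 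When \(k=0\) the guard is false: the original reduces to \texttt{skip} via [Wh] and [C2], and the augmented version takes the \texttt{else} branch, performing only \texttt{push(F,}\(\delta(\texttt{W})\)\texttt{)}, so both leave \(\sigma\) unchanged. When \(k\ge 1\) the guard is true: the original unfolds via [Wh] and [C1] to \texttt{Q; while b do Q end}, while the augmented version enters the \texttt{then} branch and runs the opening marker \texttt{push(F,}\(\delta(\texttt{W})\)\texttt{)} followed by \(aug(\texttt{Q})\), whose \(\sigma\)-effect matches the first original iteration by the outer hypothesis on the body \texttt{Q}, after which its residual is exactly \(R\) facing \(k-1\) remaining iterations, dispatched by the inner hypothesis; all marker pushes affect only \(\delta\).

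Finally I would assemble the two claims from the same case analysis. The first part (termination of \(aug(\texttt{P})\)) follows because the analysis exhibits, for every terminating original derivation, a matching terminating augmented derivation; the second part (equality of the final data store, i.e.\ \(\sigma'' = \sigma'\)) follows from the same analysis together with the preliminary remark, since no inserted \texttt{push} ever alters \(\sigma\) and the surviving statements act exactly as in the original, while \(\delta\) is populated with the reversal information along the way. The only real care needed is bookkeeping: keeping the outer structural induction and the inner iteration-count induction cleanly separated, phrasing the inner hypothesis about the residual loop \(R\) rather than about \(aug\) of a loop, and checking at each step that an inserted \texttt{push} evaluates its argument without side effects so that \(\sigma\) is preserved one step at a time.
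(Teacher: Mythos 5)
Your proof is correct, but it is organized around a different well-founded measure than the paper's. The paper proves Proposition~\ref{prop1} by induction on the \emph{length of the derivation} $(\texttt{P},\sigma,\delta) \rightarrow^* (\texttt{skip},\sigma',\delta)$: the program is written as \texttt{S;P$'$}, the Seq and Skip rules decompose the run, the tail $aug(\texttt{P$'$})$ is dispatched by the length-based induction hypothesis, and the head statement is handled by a separate per-statement lemma (Lemma~\ref{lem-2}), of which only the destructive-assignment case is displayed. You instead use structural induction on \texttt{P}, and you correctly identify the resulting obstacle: since clause \eqref{aug-while-2} restructures a loop rather than merely decorating it, the residual loop $R$ is not in the image of $aug$, so the structural hypothesis cannot be applied to it; your inner induction on the iteration count $k$ (well founded exactly because the original run is assumed to terminate) is the right repair, and your $k=0$ and $k\geq 1$ cases match the unfoldings by [Wh], [C1], [C2]. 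What the paper's measure buys is uniformity: one loop unfolding strictly shortens the remaining derivation, so no second induction and no auxiliary statement about $R$ is needed, and nesting of loops comes for free. What your route buys is explicitness precisely where the paper is silent: the while loop is the case omitted from Lemma~\ref{lem-2} ``with the other cases following similarly,'' and your $R$-phrased inner hypothesis spells out what that omitted case actually requires. Two bookkeeping points should be made explicit if you write this up: the splitting of a run of \texttt{S;P$'$} into a run of \texttt{S} followed by a run of \texttt{P$'$} needs the repeated conclusion-to-premises use of the Seq rule (as in the paper), and the inner hypothesis must be quantified over all starting data stores, since each successive iteration begins in a different $\sigma$.
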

We note that the inverse implication, namely that if $(\texttt{$aug($P$)$},\sigma,\delta) \rightarrow^* (\texttt{skip},\sigma',\delta')$, for some $\sigma'$ and $\delta'$, then $(\textup{\texttt{P}},\sigma,\delta) \rightarrow^* (\textup{\texttt{skip}},\sigma',\delta)$, would also be valid. However we defer this proof to future work, and now return to proving the second part of Proposition \ref{prop1} (with the first following correspondingly).
\begin{proof} By induction on the length of the sequence (\texttt{P}, $\sigma,\delta$) $\rightarrow^*$ (\texttt{skip}, $\sigma',\delta$). Since there are no transitions of length 0, the proposition holds vacuously. Assume that the proposition holds for programs \texttt{R}, stores $\sigma^*$ and auxiliary stores $\delta^*$, such that $(\texttt{R},\sigma^*,\delta^*) \rightarrow^* (\texttt{skip},\sigma^{*}_1,\delta^*)$ is shorter than $(\texttt{P},\sigma,\delta) \rightarrow^* (\texttt{skip},\sigma',\delta)$. Further assuming \texttt{P} is of the form \texttt{S;P$'$} such that \texttt{S} is a statement and $\texttt{P}'$ is the remaining program, we have that $$(\texttt{S;P$'$},\sigma,\delta) \rightarrow^* (\texttt{skip},\sigma',\delta)$$ for some $\sigma'$. Through use of the SOS rules \texttt{Seq} and \texttt{Skip}, we have $$(\texttt{S;P$'$},\sigma,\delta) \rightarrow^* (\texttt{skip;P$'$},\sigma'',\delta) \rightarrow (\texttt{P$'$},\sigma'',\delta) \rightarrow^* (\texttt{skip},\sigma',\delta)$$ for some $\sigma''$. With this in mind, we need to show that $(aug(\texttt{S;P$'$}),\sigma,\delta) \rightarrow^* (\texttt{skip},\sigma',\delta')$ for some $\delta'$. 

By the definition of $aug$, clause (\ref{aug-gen}) we have $(aug(\texttt{S;P$'$}),\sigma,\delta) = (a(\texttt{S});aug(\texttt{P$'$}),\sigma,\delta)$, meaning it is sufficient to prove $$\begin{array}{l}
(a(\texttt{S});aug(\texttt{P$'$}),\sigma,\delta) \rightarrow^* (\texttt{skip};aug(\texttt{P$'$}),\sigma'',\delta'') \rightarrow (aug(\texttt{P$'$}),\sigma'',\delta'')  \rightarrow^* (\texttt{skip},\sigma',\delta') \end{array}$$ for some $\sigma''$, $\delta''$ and $\delta'$. Since $(\texttt{S;P$'$},\sigma,\delta) \rightarrow^* (\texttt{P$'$},\sigma'',\delta)$, then repeated use of the Seq rule (from conclusion to premises) produces $(\texttt{S},\sigma,\delta) \rightarrow^* (\texttt{skip},\sigma'',\delta)$. Now assume  $a(\texttt{S})$ = \texttt{P$_S$} for each type of statement \texttt{S}. Then by Lemma \ref{lem-2} below, we have that $(\texttt{P$_S$},\sigma,\delta) \rightarrow^* (\texttt{skip},\sigma'',\delta'')$ for some $\delta''$. Using the Seq rule (from premises to conclusion) we obtain $$(\texttt{P$_S$};aug(\texttt{P$'$}),\sigma,\delta) \rightarrow^* (\texttt{skip};aug(\texttt{P$'$}),\sigma'',\delta'')$$ Then by the Skip rule, we get $ (\texttt{skip};aug(\texttt{P$'$}),\sigma'',\delta'') \rightarrow (aug(\texttt{P$'$}),\sigma'',\delta'')$. The induction hypothesis on $(\texttt{P$'$},\sigma'',\delta'')$ gives us $$(aug(\texttt{P$'$}),\sigma'',\delta'') \rightarrow^* (\texttt{skip},\sigma',\delta')$$ for some $\delta'$. Therefore we have obtained $ \begin{array}{l}
(a(\texttt{S});aug(\texttt{P$'$}),\sigma,\delta) \rightarrow^* (aug(\texttt{P$'$}),\sigma'',\delta'') \rightarrow^* (\texttt{skip},\sigma',\delta')
\end{array}$, meaning $(aug(\texttt{S;P$'$}),\sigma,\delta) \rightarrow^* (\texttt{skip},\sigma',\delta')$ holds as required. Therefore the proposition holds, provided the following lemma holds.
\end{proof}
\begin{lemma} \label{lem-2}
Let \texttt{S} be a statement that does not interact with the auxiliary store, $\sigma$ be an initial data store and $\delta$ be an initial auxiliary data store.  If $(\textup{\texttt{S}},\sigma,\delta) \rightarrow^* (\textup{\texttt{skip}},\sigma',\delta)$ for some $\sigma'$ then $(a(\textup{\texttt{S}}),\sigma,\delta) \rightarrow^* (\textup{\texttt{skip}},\sigma',\delta')$ for some $\delta'$. 
\end{lemma}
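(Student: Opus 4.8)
The plan is to proceed by case analysis on the syntactic form of the statement \texttt{S}, since the function $a$ is defined clause-by-clause on statement shape. The cases \texttt{skip}, destructive assignment \texttt{X = a}, constructive assignment \texttt{X cop a}, conditional \texttt{if b then P else Q end}, and while loop \texttt{while b do Q end} exhaust the grammar for \texttt{S}, and each clause of the definition of $a$ tells us exactly what \texttt{P$_S$} is. In every case I assume the hypothesis $(\texttt{S},\sigma,\delta) \rightarrow^* (\texttt{skip},\sigma',\delta)$ (note that $\delta$ is unchanged, since the original program does not touch the auxiliary store) and must exhibit a terminating run of $(a(\texttt{S}),\sigma,\delta)$ ending in $(\texttt{skip},\sigma',\delta')$ with the \emph{same} final data store $\sigma'$ but a possibly-enlarged auxiliary store $\delta'$.

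The routine cases come first. For \texttt{skip}, $a(\texttt{skip}) = \texttt{skip}$ and there is nothing to do ($\delta' = \delta$). For the constructive assignment, $a(\texttt{X cop a}) = \texttt{X cop a}$ is literally unchanged, so the original run transfers verbatim and again $\delta' = \delta$. For the destructive assignment, $a(\texttt{X = a}) = \texttt{push(}\sigma(\texttt{X}),\delta(\texttt{X})\texttt{); X = a}$: here I would first fire the \texttt{Push1}/\texttt{Push2} and \texttt{Seq}/\texttt{Skip} rules to reduce the inserted \texttt{push} to \texttt{skip}, moving $\delta$ to $\delta[\sigma(\texttt{X}) \mapsto \texttt{X}]$ while leaving $\sigma$ untouched, and then run the residual \texttt{X = a} exactly as in the original derivation (which is unaffected by the changed auxiliary store, since \texttt{X = a} does not read $\delta$). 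The point to check is that pushing the old value of \texttt{X} does not disturb the subsequent evaluation of \texttt{a} or the final assignment, which holds precisely because the state-saving is segregated into $\delta$.

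The conditional case requires invoking the structure of the original derivation: from $(\texttt{if b then P else Q end},\sigma,\delta) \rightarrow^* (\texttt{skip},\sigma',\delta)$, rules \texttt{C3} evaluate \texttt{b} to a Boolean value, then \texttt{C1} or \texttt{C2} selects the taken branch, and the chosen branch then runs to \texttt{skip}. In the augmented version $a(\texttt{if }b\ldots)$ the same guard evaluation and branch selection occur, after which the branch is now $aug(\texttt{P})$ (resp.\ $aug(\texttt{Q})$) followed by the inserted \texttt{push(T}$,\delta(\texttt{B})\texttt{)}$ (resp.\ \texttt{push(F}$,\ldots$). I would appeal to Proposition~\ref{prop1} applied to the sub-run of the taken branch to obtain a terminating augmented run of that branch preserving the data store, then fire the trailing \texttt{push} to record the control bit, absorbing it into $\delta'$. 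Some care is needed with the stated direction of Proposition~\ref{prop1}, and I expect this mutual dependence between the proposition and the lemma to be the main obstacle.

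Indeed, the genuine difficulty is the \texttt{while} case, for two reasons. First, the augmentation clause~\eqref{aug-while-2} is not a homomorphic copy of the loop but a restructured \texttt{if}-then-\texttt{while}-then-\texttt{push} construct that peels off the first iteration and appends closing \texttt{push(T}$,\delta(\texttt{W})\texttt{)}$ markers; I must show this restructured program realises the same sequence of data-store effects as the original loop. Second, the loop has no fixed iteration count, so the argument is inherently inductive on the number of iterations — which is why the enclosing Proposition~\ref{prop1} is proved by induction on derivation length and why Lemma~\ref{lem-2} is permitted to cite that induction hypothesis. Concretely, I would use the \texttt{Wh} unfolding rule to expose the original loop as an \texttt{if} whose true branch is $\texttt{Q;while }b\texttt{ do Q end}$, match the first unfolding against the peeled first iteration (guarded by the same \texttt{b}, emitting \texttt{push(F}$,\delta(\texttt{W})\texttt{)}$ rather than the \texttt{T} of later iterations), and then handle the residual loop by the induction hypothesis, checking that the \texttt{T}/\texttt{F} markers accumulate on stack \texttt{W} in the order advertised in Section~\ref{sec:aug} while $\sigma$ evolves identically to the original. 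The false-guard subcase collapses to a single \texttt{push(F}$,\delta(\texttt{W})\texttt{)}$ with no data-store change. Throughout, I would keep emphasising the invariant that the original statements never read from $\delta$, so none of the inserted \texttt{push} operations can alter the control flow or the arithmetic that determines $\sigma'$.
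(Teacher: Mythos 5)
Your proposal is correct and follows essentially the same route as the paper: a case analysis on the syntactic form of \texttt{S}, with the destructive-assignment case handled exactly as in the paper's proof (fire \texttt{Push1}/\texttt{Skip} to record $\sigma(\texttt{X})$ on $\delta$, then replay the original \texttt{DA2}$^{l}$/\texttt{DA1} derivation unchanged, since the assignment never reads $\delta$). Note that the paper writes out only that single case and dismisses the others as ``following similarly,'' so your sketches of the conditional and while-loop cases --- in particular your observation that the restructured loop augmentation and its reliance on the enclosing induction of Proposition~\ref{prop1} are where the real work lies --- go beyond what the paper itself records.
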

\begin{proof}
We consider each type of statement \texttt{S} in turn. Due to space constraints, we only include one case, with the other cases following similarly. The notation $(\texttt{Q},\sigma,\delta) \xrightarrow[\text{X}]{}^l (\texttt{Q$'$},\sigma^\dagger,\delta^\dagger)$ denotes $l$ transitions by the SOS rule \text{X} produces the program \texttt{Q$'$}, store $\sigma^\dagger$ and auxiliary store $\delta^\dagger$.

\begin{case} \label{c-da-lem1}
Consider statement \texttt{X=a} and its execution under the initial stores $\sigma$ and $\delta$ where \texttt{X} is initially \texttt{v$'$} and \texttt{a} evaluates to \texttt{v} in $l$ steps such that \[ (\texttt{X=a},\sigma,\delta) \xrightarrow[\text{DA2}]{}^l (\texttt{X=v},\sigma,\delta) \xrightarrow[\text{DA1}]{} (\texttt{skip},\sigma[\texttt{X} \mapsto \texttt{v}], \delta).\]  Recall that $ \sigma(\texttt{X})$ = \texttt{v$'$}. The execution of the augmented version of \texttt{X=a} is $$\begin{array}{l}
(\texttt{push(\texttt{v$'$},$\delta$(X))};\texttt{X=a},\sigma,\delta) \xrightarrow[\text{Push$_1$, Skip}]{} (\texttt{X=a},\sigma,\delta[\texttt{v$'$} \mapsto \texttt{X}]) \\[4pt] \xrightarrow[\text{DA2}]{}^l (\texttt{X=v},\sigma,\delta[\texttt{v$'$} \mapsto \texttt{X}]) \xrightarrow[\text{DA1}]{}  (\texttt{skip},\sigma[\texttt{X} \mapsto \texttt{v}],\delta[\texttt{v$'$} \mapsto \texttt{X}])
\end{array}$$
As such, this case holds with $\sigma'$ = $\sigma[\texttt{X} \mapsto \texttt{v}]$ and $\delta'$ = $\delta[\texttt{v$'$} \mapsto \texttt{X}]$. 
\end{case}
With all other cases following in a similar manner, Lemma \ref{lem-2} holds.  
\end{proof}

\section{Inversion} \label{sec:inv}
The second step of our initial approach is to generate the inverted version through a process named inversion. Inversion takes each statement in reverse order, generates the code fragment necessary to undo its effects, before combining these fragments to generate the inverted version. The majority of the returned code fragments will use the reversal information on the auxiliary store, meaning the augmented version must be executed prior to the execution of this version.

Destructive assignments are replaced with another destructive assignment to the same variable, but this time assigning the value currently at the top of the self-named stack (see \ref{inv:x=a}). Constructive assignments require no reversal information, and can simply be replaced by their inverse (see \ref{ioprule}). Conditional statements saved a Boolean indicating which branch was executed, meaning the retrieval and evaluation of this now replaces the original condition, along with the recursive inversion of the branches (see \ref{iif}). While loops saved a sequence of Booleans in the desired order, meaning the while loop can continually iterate until the top of the stack \texttt{W} is no longer true (see \ref{iwhile}), along with the recursive inversion of the body. As mentioned earlier, the reverse execution of conditionals and loops does not require their conditions to be re-evaluated, increasing efficiency. 

Let $\mathbb{P}^{-1}$ be the set of inverted programs. The function $inv: \mathbb{P} \rightarrow \mathbb{P}^{-1}$ takes the original program and recursively applies the function $i: \mathbb{S} \rightarrow \mathbb{P}^{-1}$ to each statement in reverse order, producing its inverted version. We now define $inv$ and $i$, where \texttt{cop} $\in$ \texttt{Cop}, and \texttt{icop} = \texttt{+=} if \texttt{cop} = \texttt{-=}, and \texttt{-=} otherwise. 
\begin{align}
\textit{inv}(\varepsilon) &= \varepsilon \label{invbase}\\
\textit{inv}(\texttt{S;P}) &= \textit{inv}(\texttt{P}); \textit{ i}(\texttt{S})  \label{invrec}\\ \nonumber \\
i(\texttt{skip}) &= \texttt{skip} \\
\textit{i}(\texttt{X = a}) &= \texttt{X = }\texttt{pop(}\delta(\texttt{X})\texttt{)} \label{inv:x=a}\\
\begin{split} \textit{i}(\texttt{X cop a}) &= \texttt{X icop a} \end{split} \label{ioprule}\\
\begin{split} \textit{i}(\texttt{if b then P else Q end}) &= \texttt{if }\texttt{pop(}\delta(\texttt{B})\texttt{)}\texttt{ then }inv(\texttt{P})  \texttt{ else } inv(\texttt{Q}) \texttt{ end} \end{split} \label{iif}\\
\begin{split} \textit{i}(\texttt{while b do P end}) &=  \texttt{while }\texttt{pop(}\delta(\texttt{W})\texttt{)}\texttt{ do } inv(\texttt{P}) \texttt{ end} \end{split} \label{iwhile}
\end{align}
We now return to our example code shown in Figure \ref{original-p}. Applying the function $inv$ to this program produces the inverted program shown in Figure \ref{inv-p}. The overall program order has been inverted, with the while loop now being executed first. An example destructive assignment is on line 2 of Figure \ref{original-p}, and is inverted via the line 11 of Figure \ref{inv-p}. Execution of this version under the final stores $\sigma'$ = \{(\texttt{X},\texttt{11}), (\texttt{Y},\texttt{18}), (\texttt{Z},\texttt{7}), (\texttt{N},\texttt{2})\} and $\delta'$ = \{(\texttt{X},\{\texttt{7,4,3,4}\}), (\texttt{Y},\{\texttt{3}\}), (\texttt{Z},\{\texttt{4,3,3,0}\}), (\texttt{N},\{\}), (\texttt{B},\{\texttt{T}\}), (\texttt{W},\{\texttt{T,T,T,F}\})\}, produces the initial stores $\sigma''$ = \{(\texttt{X},\texttt{4}), (\texttt{Y},\texttt{3}), (\texttt{Z},\texttt{0}), (\texttt{N},\texttt{5})\} and $\delta''$ = \{(\texttt{X},\{\}), (\texttt{Y},\{\}), (\texttt{Z},\{\}), (\texttt{N},\{\}), (\texttt{B},\{\}), (\texttt{W},\{\})\}. As should be clear, $\sigma''$ = $\sigma$ and $\delta''$ = $\delta$, meaning the reversal has executed successfully. 

We will now present our second result for $(\texttt{P},\sigma,\delta)$. Recall that by Proposition \ref{prop1}, the execution of the augmented version of \texttt{P} produces the modified auxiliary store $\delta'$, which plays a crucial role in Proposition \ref{prop2} below. Firstly, Proposition \ref{prop2} states that if the original program \texttt{P} terminates on $\sigma$ and $\delta$, producing $\sigma'$, then the execution of the inverted version of \texttt{P} terminates on $\sigma'$ and the modified auxiliary store $\delta'$. Secondly, Proposition \ref{prop2} states that given the final stores $\sigma'$ and $\delta'$ produced via execution of the augmented version of \texttt{P}, executing the corresponding inverted version on these stores restores the initial state, namely $\sigma$ and $\delta$.

\begin{prop}\label{prop2}
Let \textup{\texttt{P}} be a program that does not interact with the auxiliary store, $\sigma$ be an arbitrary initial data store and $\delta$ be an arbitrary initial auxiliary data store. Firstly, if $(\textup{\texttt{P}},\sigma,\delta) \rightarrow^* (\textup{\texttt{skip}},\sigma',\delta'')$, for some $\sigma'$ and $\delta''$, then $(inv(\textup{\texttt{P}}),\sigma',\delta') \rightarrow^* (\textup{\texttt{skip}},\sigma'',\delta''')$, for some $\sigma''$ and $\delta'''$. Secondly, if $(\textup{\texttt{P}},\sigma,\delta) \rightarrow^* (\textup{\texttt{skip}},\sigma',\delta)$, for some $\sigma'$, then $(inv(\textup{\texttt{P}}),\sigma',\delta') \rightarrow^* (\textup{\texttt{skip}},\sigma,\delta)$, for some $\delta'$. 
\end{prop}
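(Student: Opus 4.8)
The plan is to prove the second statement (the first termination statement follows correspondingly, exactly as noted for Proposition~\ref{prop1}) by induction on the length of the original derivation $(\texttt{P},\sigma,\delta) \rightarrow^* (\texttt{skip},\sigma',\delta)$, supported by a single-statement inversion lemma playing the role that Lemma~\ref{lem-2} plays for augmentation. The base case is vacuous since there are no derivations of length $0$. For the inductive step I would write $\texttt{P} = \texttt{S;P}'$ and, as in the proof of Proposition~\ref{prop1}, split the original run as $(\texttt{S;P}',\sigma,\delta) \rightarrow^* (\texttt{P}',\sigma_1,\delta) \rightarrow^* (\texttt{skip},\sigma',\delta)$ for some $\sigma_1$. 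Applying Proposition~\ref{prop1} to the augmented program yields a matching decomposition $(aug(\texttt{P}),\sigma,\delta) = (a(\texttt{S});aug(\texttt{P}'),\sigma,\delta) \rightarrow^* (aug(\texttt{P}'),\sigma_1,\delta_1) \rightarrow^* (\texttt{skip},\sigma',\delta')$, where $a(\texttt{S})$ produces the intermediate auxiliary store $\delta_1$ and $aug(\texttt{P}')$ produces the final store $\delta'$ on which the inverted program must be run.

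The key structural observation is that inversion reverses order: by clause~(\ref{invrec}), $inv(\texttt{P}) = inv(\texttt{P}');i(\texttt{S})$. Hence the inverted run should proceed as $(inv(\texttt{P}');i(\texttt{S}),\sigma',\delta') \rightarrow^* (i(\texttt{S}),\sigma_1,\delta_1) \rightarrow^* (\texttt{skip},\sigma,\delta)$. The first segment is supplied by the induction hypothesis applied to $\texttt{P}'$ with initial stores $\sigma_1$ and $\delta_1$ — whose original run leaves $\delta_1$ unchanged and whose augmented run produces precisely the $\delta'$ above — giving $(inv(\texttt{P}'),\sigma',\delta') \rightarrow^* (\texttt{skip},\sigma_1,\delta_1)$, which I would then lift through \texttt{Seq} and \texttt{Skip} exactly as in Proposition~\ref{prop1}. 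The second segment is the content of the single-statement lemma: for a statement \texttt{S} not touching the auxiliary store, if $(\texttt{S},\sigma,\delta) \rightarrow^* (\texttt{skip},\sigma_1,\delta)$ and, by Lemma~\ref{lem-2}, $(a(\texttt{S}),\sigma,\delta) \rightarrow^* (\texttt{skip},\sigma_1,\delta_1)$, then $(i(\texttt{S}),\sigma_1,\delta_1) \rightarrow^* (\texttt{skip},\sigma,\delta)$.

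I would prove that lemma by cases on \texttt{S}. For \texttt{skip} it is immediate. For a destructive assignment (\ref{inv:x=a}), \texttt{X = pop($\delta$(X))} pops precisely the value $\sigma(\texttt{X})$ that $a(\texttt{X = a})$ had pushed, restoring both $\sigma$ and $\delta$ via the \texttt{Pop} rule. For a constructive assignment (\ref{ioprule}) no information was stored, so correctness rests on \texttt{X icop a} recomputing the same value of \texttt{a}; this is where I use the language restriction that the expression does not mention the assigned variable, so that \texttt{a} evaluates identically in $\sigma$ and in $\sigma_1$, which differ only at \texttt{X}. For a conditional (\ref{iif}) the Boolean popped from \texttt{B} selects the branch actually taken, and I invoke the main induction hypothesis on that branch, a strictly shorter run, to undo it.

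The main obstacle is the while loop (\ref{iwhile}). Here I would first establish that an original run of $k \ge 1$ iterations causes the augmented loop (\ref{aug-while-2}) to leave the stack \texttt{W} holding exactly $k$ copies of \texttt{T} above a single \texttt{F} (and just \texttt{F} when $k = 0$), interleaved with, but residing on stacks disjoint from, the LIFO reversal data pushed onto the variable stacks. I would then run an inner induction on $k$: the inverted loop \texttt{while pop($\delta$(W)) do $inv$(P) end} pops a \texttt{T}, runs $inv(\texttt{P})$ to undo the final iteration by the main hypothesis, and repeats, consuming the $k$ copies of \texttt{T} while reversing the iterations in the opposite order, until the final \texttt{F} terminates the loop with \texttt{W} and every variable stack restored. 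The delicate points are confirming that the last-pushed iteration is reversed first, so that the variable-stack pushes and pops nest correctly, and that the control stack \texttt{W} and the data stacks, being separate, may be reasoned about independently; pinning down this bookkeeping precisely is the crux of the whole argument.
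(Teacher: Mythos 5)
Your proposal follows essentially the same route as the paper: induction on the length of the original derivation, decomposition of \texttt{P} as \texttt{S;P$'$}, use of Proposition~\ref{prop1} to obtain the augmented run and its intermediate stores, the observation that $inv(\texttt{S;P$'$}) = inv(\texttt{P$'$});i(\texttt{S})$ with the induction hypothesis handling $inv(\texttt{P$'$})$, and a single-statement inversion lemma (the paper's Lemma~\ref{lem-3}) finishing the tail. Your case analysis of that lemma — including the constructive-assignment, conditional, and while-loop cases with the inner induction on iteration count — fills in details the paper omits with ``the other cases follow similarly,'' but it is the same argument, not a different one.
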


We note that the first result in Proposition \ref{prop2} would be valid, but postpone the proof to future work and now return to proving the second part of Proposition \ref{prop2}.

\begin{proof}
By induction on the length of the sequence (\texttt{P}, $\sigma,\delta$) $\rightarrow^*$ (\texttt{skip}, $\sigma',\delta$). Since there are no transitions of length 0, the proposition holds vacuously. Assume that the proposition holds for programs \texttt{R}, stores $\sigma^*$ and auxiliary stores $\delta^*$, such that $(\texttt{R},\sigma^*,\delta^*) \rightarrow^* (\texttt{skip},\sigma^{*}_1,\delta^*)$ is shorter than $(\texttt{P},\sigma,\delta) \rightarrow^* (\texttt{skip},\sigma',\delta)$. Further assume \texttt{P} is of the form \texttt{S;P$'$} such that \texttt{S} is a statement and $\texttt{P}'$ is the remaining program. Let $(\texttt{S;P$'$},\sigma,\delta) \rightarrow^* (\texttt{skip},\sigma',\delta)$ for some $\sigma'$. This means that $$ (\texttt{S;P$'$},\sigma,\delta) \rightarrow^* (\texttt{skip;P$'$},\sigma'',\delta) \rightarrow (\texttt{P$'$},\sigma'',\delta) \rightarrow^* (\texttt{skip},\sigma',\delta) $$ for some $\sigma''$. By applying the Seq rule (conclusion to premises) to $(\texttt{S;P$'$},\sigma,\delta) \rightarrow^* (\texttt{skip;P$'$},\sigma'',\delta)$, we obtain $(\texttt{S},\sigma,\delta) \rightarrow^* (\texttt{skip},\sigma'',\delta)$. By the definition of $aug$, clause (\ref{aug-gen}) we have $(aug(\texttt{S;P$'$}),\sigma,\delta) = (a(\texttt{S}); aug(\texttt{P$'$}), \sigma,\delta)$ and by Proposition \ref{prop1} we have $$\begin{array}{l}
(a(\texttt{S});aug(\texttt{P$'$}),\sigma,\delta) \rightarrow^* (\texttt{skip};aug(\texttt{P$'$}),\sigma'',\delta'')  \rightarrow (aug(\texttt{P$'$}),\sigma'',\delta'') \rightarrow^* (\texttt{skip},\sigma',\delta')
\end{array}$$ for some $\delta''$, $\delta'$. Using Seq (conclusion to premise) on $(a(\texttt{S});aug(\texttt{P$'$}),\sigma,\delta)   \rightarrow^* (\texttt{skip};aug(\texttt{P$'$}),\sigma'', \delta'')$ we obtain $(a(\texttt{S}),\sigma,\delta) \rightarrow^* (\texttt{skip},\sigma'',\delta'')$.

We need to show that given $\sigma'$ and $\delta'$, $(inv(\texttt{S;P$'$}),\sigma',\delta') \rightarrow^* (\texttt{skip},\sigma,\delta)$. By the definition of $inv$, clause \ref{invrec}, we have $inv(\texttt{S;P$'$}) = inv(\texttt{P$'$});i(\texttt{S})$, meaning we shall show $$ (inv(\texttt{P$'$});i(\texttt{S}),\sigma',\delta') \rightarrow^* (i(\texttt{S}),\sigma^\dagger,\delta^\dagger) \rightarrow^* (\texttt{skip},\sigma,\delta) $$ for some $\sigma^\dagger$ and $\delta^\dagger$. The induction hypothesis for $(\texttt{P$'$},\sigma'',\delta'') \rightarrow^* (\texttt{skip},\sigma',\delta'')$, where $\delta''$ is obtained by augmentation of \texttt{S} on $\delta$, gives us $(inv(\texttt{P$'$}),\sigma',\delta') \rightarrow^* (\texttt{skip},\sigma'',\delta'')$ where $\delta'$ is obtained by augmentation of \texttt{P$'$} on $\sigma''$ and $\delta''$ as shown by $(aug(\texttt{P}),\sigma'',\delta'') \rightarrow^* (\texttt{skip},\sigma',\delta')$ above. Using the rule Seq (premise to conclusion) repeatedly we get $$ (inv(\texttt{P$'$});i(\texttt{S}),\sigma',\delta') \rightarrow^* (\texttt{skip};i(\texttt{S}),\sigma'',\delta'') \rightarrow (i(\texttt{S}),\sigma'',\delta'') $$ Therefore $\sigma^\dagger = \sigma''$ and $\delta^\dagger = \delta''$. All that remains now is to prove $(i(\texttt{S}),\sigma'',\delta'') \rightarrow^* (\texttt{skip},\sigma,\delta) $, which is done in Lemma \ref{lem-3} below. 
\end{proof}

\begin{lemma} \label{lem-3}
Let \textup{\texttt{S}} be a statement that does not interact with the auxiliary store, $\sigma$ be an arbitrary initial data store and $\delta$ be an arbitrary initial auxiliary data store. Then if $(\textup{\texttt{S}},\sigma,\delta) \rightarrow^* (\textup{\texttt{skip}},\sigma',\delta)$ for some $\sigma'$, then $(i(\textup{\texttt{S}}),\sigma',\delta') \rightarrow^* (\textup{\texttt{skip}},\sigma,\delta)$ for some $\delta'$. 
\end{lemma}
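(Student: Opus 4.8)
The plan is to prove Lemma \ref{lem-3} by case analysis on the form of the statement \texttt{S}, mirroring the five clauses defining the inversion function $i$ and following the structure already used for Lemma \ref{lem-2}. The organising observation is that the $\delta'$ appearing in the conclusion is not genuinely arbitrary: it is precisely the auxiliary store produced by augmenting \texttt{S}, so by Lemma \ref{lem-2} we may take $\delta'$ to satisfy $(a(\texttt{S}),\sigma,\delta) \rightarrow^* (\texttt{skip},\sigma',\delta')$. Thus the reversal information that $i(\texttt{S})$ is about to consume from $\delta'$ is exactly what $a(\texttt{S})$ deposited, and the task in each case is to show that consuming it drives $\sigma'$ back to $\sigma$ and $\delta'$ back to $\delta$ at the same time.

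First I would dispatch the three non-recursive cases. For \texttt{skip} the hypothesis forces $\sigma' = \sigma$ and $\delta' = \delta$, and $i(\texttt{skip}) = \texttt{skip}$ terminates at once. For the destructive assignment \texttt{X = a}, the analogue of Case \ref{c-da-lem1}, execution yields $\sigma' = \sigma[\texttt{X} \mapsto \texttt{v}]$ while augmentation yields $\delta' = \delta[\texttt{v}' \mapsto \texttt{X}]$ with $\texttt{v}' = \sigma(\texttt{X})$; since $i(\texttt{X=a}) = \texttt{X = pop}(\delta(\texttt{X}))$, the single \texttt{Pop} step returns $\texttt{v}'$ and simultaneously restores the stack \texttt{X} to its state in $\delta$, after which \texttt{DA1} sets $\sigma'[\texttt{X} \mapsto \texttt{v}'] = \sigma$. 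The point worth recording is that the pop does double duty, recovering both the lost value and the auxiliary store. For the constructive assignment \texttt{X cop a}, augmentation pushes nothing so $\delta' = \delta$, and $i(\texttt{X cop a}) = \texttt{X icop a}$; here I would invoke the language restriction that \texttt{a} does not mention \texttt{X}, so \texttt{a} evaluates to the same \texttt{v} in $\sigma$ and in $\sigma'$ and the inverse operator exactly cancels the forward one.

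The two recursive cases require appealing to Proposition \ref{prop2} on strictly smaller sub-computations, so I would present the whole development as a single induction on trace length, of which this lemma is the statement case, making every such appeal an appeal to the induction hypothesis. For the conditional, assume without loss of generality that \texttt{b} evaluated to \texttt{T}, the \texttt{F} case being symmetric; then $a(\texttt{S})$ ran $aug(\texttt{P})$ and pushed \texttt{T} to \texttt{B}, so $\delta'$ carries \texttt{T} atop a store $\delta_P$ holding the branch's reversal information. Executing $i(\texttt{S}) = \texttt{if pop}(\delta(\texttt{B}))\texttt{ then }inv(\texttt{P})\texttt{ else }inv(\texttt{Q})\texttt{ end}$ pops \texttt{T}, restores \texttt{B}, selects the $inv(\texttt{P})$ branch, and Proposition \ref{prop2} applied to the shorter branch computation $(\texttt{P},\sigma,\delta) \rightarrow^* (\texttt{skip},\sigma',\delta)$ gives $(inv(\texttt{P}),\sigma',\delta_P) \rightarrow^* (\texttt{skip},\sigma,\delta)$, as required.

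The while loop is the main obstacle, and I expect the bulk of the work there. I would set up an inner induction on the number of iterations $k$ the original loop performs. The delicate part is the encoding chosen by clause (\ref{aug-while-2}): for $k \geq 1$ the stack \texttt{W} ends up holding $k$ copies of \texttt{T} above a single \texttt{F}, the opening \texttt{F} and closing \texttt{T} having been swapped, while for $k = 0$ it holds just \texttt{F}. Since $i(\texttt{S}) = \texttt{while pop}(\delta(\texttt{W}))\texttt{ do }inv(\texttt{P})\texttt{ end}$, I must show that the inverted loop pops exactly $k$ copies of \texttt{T}, running $inv(\texttt{P})$ once per pop, and then pops the terminating \texttt{F}, and that because the auxiliary stacks are LIFO the information deposited by the $j$-th forward iteration is precisely what the corresponding reverse iteration consumes. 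Threading the intermediate stores through the inner induction, each single reverse iteration is justified by Proposition \ref{prop2} on the body \texttt{P}, a strictly shorter computation, and after all $k$ reverse iterations the data store is back to $\sigma$ and the final \texttt{F}-pop restores \texttt{W} and hence all of $\delta$. Getting the invariant of this inner induction to describe correctly both the residual contents of \texttt{W} and the matching residual reversal information on the other stacks is the part most likely to need care.
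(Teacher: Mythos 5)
Your proposal is correct and takes essentially the same approach as the paper: a case analysis on the statement type, and your destructive-assignment case reproduces the paper's only displayed case exactly (the \texttt{Pop} step doing double duty by recovering the old value while restoring the stack, followed by \texttt{DA1}). The paper omits the remaining cases ``due to space constraints,'' and your elaboration of them --- notably resolving the mutual dependence between this lemma and Proposition \ref{prop2} by folding everything into one induction on trace length, and your analysis of the \texttt{W}-stack encoding ($k$ copies of \texttt{T} above a single \texttt{F}) --- fills in precisely what the paper leaves implicit.
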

\begin{proof}
We consider each type of statement \texttt{S} in turn. Due to space constraints, we only include one case, with the other cases following similarly.

\begin{case}
Consider statement \texttt{X=a} and its execution under the initial stores $\sigma$ and $\delta$ where \texttt{X} is initially \texttt{v$'$} and \texttt{a} evaluates to \texttt{v} in $l$ steps such that \[ (\texttt{X=a},\sigma,\delta) \xrightarrow[\text{DA2}]{}^l (\texttt{X=v},\sigma,\delta) \xrightarrow[\text{DA1}]{} (\texttt{skip},\sigma[\texttt{X} \mapsto \texttt{v}], \delta)\] Then by Proposition \ref{prop1} and Lemma \ref{lem-2} Case \ref{c-da-lem1}, we have that ($a($\texttt{X=a}$)$,$\sigma$,$\delta$) $\rightarrow \ldots \rightarrow (\texttt{skip},\sigma'$,$\delta'$), such that $\sigma' = \sigma[\texttt{X} \mapsto \texttt{v}]$ and $\delta' = \delta[\texttt{v$'$} \mapsto \texttt{X}]$. Then the execution of the inverted version of \texttt{X=a} is $$\begin{array}{l}
(i(\texttt{X=a}),\sigma',\delta') \xrightarrow[\text{Pop}]{} (\texttt{X=v$'$},\sigma',\delta'[\texttt{X}]) \xrightarrow[\text{DA1}]{}  (\texttt{skip},\sigma'[\texttt{X} \mapsto \texttt{v$'$}]), \delta'[\texttt{X}])
\end{array}$$ such that $\sigma'[\texttt{X} \mapsto \texttt{v$'$}] = \sigma$ since \texttt{v$'$} is equal to the initial value of \texttt{X} retrieved by the \texttt{pop} operation, and $\delta'[\texttt{X}] = \delta$. Therefore the stores have been restored to their initial states, as required, meaning the case holds.
\end{case}
With all other cases following in a similar manner, the Lemma is proved to be correct.
\end{proof}

\section{Adding Parallelism} \label{sec:update}
We will now modifiy our first approach to support non-communicating parallelism \cite{HH2010}, also referred to as \emph{interleaving}, where the execution of two (or more) programs are interleaved while each individually maintains program order. To the best of our knowledge, RCC does not support parallelism in any form. Due to space constraints, we restrict the language to assignments and parallelism only. Conditionals and loops can be modified in a similar way and so are omitted. Let us reuse previous notation such that $\mathbb{P}$ is now the set of programs of this restricted language, $\hat{\mathbb{P}}$ is now the set of annotated programs, $\mathbb{P}^{-1}$ is now the set of inverted programs and $\mathbb{S}$ is now the set of statements of this restricted language. The definition of a statement becomes 
\begin{align*}
\begin{split} \texttt{S} &::= \texttt{skip} ~|~  \texttt{X = Exp} ~|~ \texttt{X Cop Exp} ~|~  \texttt{P par P}  \end{split}
\end{align*}
\subsection{Challenges}
\lstset{
numbers=none, 
numberstyle=\small, 
numbersep=2pt,
language=Pascal, 
framexleftmargin=15pt}
\begin{figure}[t]
  \begin{minipage}[b]{0.49\linewidth}
   \centering
    \begin{lstlisting}[xleftmargin=5.0ex,mathescape=true]
$\texttt{X}$+=$\texttt{Y}$+$\texttt{2}$ $\texttt{par}$ ($\texttt{Y}$=$\texttt{X}$+$\texttt{2}$; $\texttt{X}$=$\texttt{4}$)
\end{lstlisting}
    \caption{Original program}
	\label{simp-ex}
  \end{minipage}
  \hspace{0.01cm}
  \begin{minipage}[b]{0.49\linewidth}
    \centering
    \begin{lstlisting}[xleftmargin=5.0ex,mathescape=true]
$\texttt{X}$+=$\texttt{Y}$+$\texttt{2}$[] $\texttt{par}$ ($\texttt{Y}$=$\texttt{X}$+$\texttt{2}$[]; $\texttt{X}$=$\texttt{4}$[])
	\end{lstlisting}
    \caption{Annotated program}
    \label{simp-ex-ann}
  \end{minipage}
\end{figure}
Supporting parallelism introduces three challenges. Execution of parallel programs results in a non-deterministic execution order. Our first approach works as the programs are sequential, allowing the inverted program to follow the \emph{inverted program order}. However there may be different execution orders of the same parallel program due to interleaving. So, without care, programs can be executed forwards under one interleaving and reversed under another, which is clearly incorrect. Consider the program in Figure \ref{simp-ex} represented via (\texttt{P1 par (Q1;Q2)}), where the three possible execution interleavings are (\texttt{P1;Q1;Q2}), (\texttt{Q1;P1;Q2}) or (\texttt{Q1;Q2;P1}). Imagine the program here is executed under the first interleaving with the initial data store $\sigma$ where \texttt{X=1} and \texttt{Y=1}, resulting in the final state $\sigma'$ where \texttt{X=4} and \texttt{Y=6}. Without further information, inversion may assume the third interleaving was executed and so inverts the statements in the order (\texttt{P1;Q2;Q1}), clearly producing the incorrect final state where \texttt{X=4} and \texttt{Y=1}. We therefore will require both the auxiliary store $\delta$ and the inverse program as before, as well as the order in which the statements were executed forwards, termed \emph{interleaving order}.

Another challenge is the \emph{atomicity of statements}. Execution of a statement typically takes several steps to complete, increasing the possible execution paths and likelihood of races. Consider Figure \ref{simp-ex} with no assumption of atomicity and initial state $\sigma$ as above. Imagine \texttt{Y} is first evaluated in \texttt{P1}, then all of \texttt{Q1} and \texttt{Q2} are executed, before \texttt{P1} finally completes. This leads to the final state $\sigma'$ where \texttt{X=7} and \texttt{Y=3}, values not reachable when assuming statement atomicity.

Finally, \texttt{push} operations inserted in our first approach relate to a specific statement and these must be executed atomically. Consider the program \texttt{X+=1 par X=5} augmented via our first approach into \texttt{X+=1 par push($\sigma$(X),$\delta$(X)); X=5}, with no such atomicity and \texttt{X} initially \texttt{1}. Assume that the \texttt{push} statement executes first, storing the value \texttt{1} onto \texttt{$\delta$(X)}. The constructive assignment is then executed, incrementing \texttt{X} by one to \texttt{2}. Finally the destructive assignment is executed, updating \texttt{X} to \texttt{5}. The inverted version of this program is \texttt{X-=1 par X=pop($\delta$(X)}. Reversing the same interleaving first inverts the destructive assignment, assigning the value \texttt{1} from \texttt{$\delta$(X)} to the variable \texttt{X}. The constructive assignment is then inverted, decrementing \texttt{X} by one to \texttt{0}. Clearly, this reversal has been unsuccessful.

\subsection{Overcoming these Challenges} 
We update our approach to now capture the interleaving order. An identifier, or element of the set of natural numbers used in ascending order, is associated with each statement, each time it is executed, and stored onto a stack within the source code, very much like the \emph{communication keys} of CCSK \cite{IP2007,IP2012}. These identifiers index any reversal information stored, and are used to direct the execution of the inverted version. This makes the execution order deterministic, thus removing the first challenge. 

The updated approach will not introduce \texttt{push} statements in order to avoid issues relating to statement atomicity. A combination of this, and the fact that the interleaving order is not determined until runtime, mean all state-saving will be deferred to the operational semantics. A separate set of operational semantics are defined for forward execution. As such, inversion will no longer introduce \texttt{pop} statements, with all interaction with the reversal information being deferred to another separate set of operational semantics. In future work, we will add support for conditionals and loops, and further extend this with locks and mutual exclusion to allow this approach to be implemented within the language syntax.

We make the assumption of the atomicity of statements, restricting all interleaving to statement level, though this will be removed in future work.

\subsection{Annotation and Forward Execution} \label{sec:up-an-sos}
The process of augmentation is replaced with \emph{annotation}. This takes the original program, appends the necessary stacks into the program statements' source code, before returning the annotated version. Each statement is associated with a stack, necessary for programs containing loops as each statement may be executed multiple times requiring multiple identifiers. Stacks are not strictly necessary for our restricted language, however will be vital when we introduce conditionals and loops and so are included here for continuity. Each of these stacks is initially empty, and named uniquely via the function \texttt{nextS}. Let $\mathbb{S}'$ be the set of annotated statements. The function $ann: \mathbb{P} \rightarrow \hat{\mathbb{P}}$ takes the original program and recursively applies the re-defined function $a: \mathbb{S} \rightarrow \mathbb{S}'$ to each statement, producing the annotated version, where \texttt{e} is an arithmetic expression. 
\begin{alignat*}{2}
ann(\varepsilon) &= \varepsilon \texttt{ A} & \hspace{.7cm}  ann(\texttt{S;P}) &= a(\texttt{S}); ann(\texttt{P})  \\
a(\texttt{skip}) &= \texttt{skip A} & \hspace{.7cm} a(\texttt{X = e}) &= \texttt{X = e A}\\
a(\texttt{X cop e}) &= \texttt{X cop e A} & \hspace{.7cm}  a(\texttt{P par Q}) &= ann(\texttt{P}) \texttt{ par } ann(\texttt{Q}) 
\end{alignat*}
At this point, we have introduced a new syntactic category for annotated programs. Annotated programs and statements are defined below, with arithmetic and Boolean expressions as in Section \ref{sec:prog}. The sets $\mathbb{P}$ and $\mathbb{S}$ are also extended accordingly. 
\begin{align*}
\texttt{AP} &::= \varepsilon \texttt{ A} ~|~ \texttt{AS};~ \texttt{AP} \\
\begin{split} \texttt{AS} &::= \texttt{skip A} ~|~ \texttt{X = Exp A} ~|~ \texttt{X Cop Exp A} ~|~ \texttt{AP par AP} \end{split}
\end{align*}

Consider Figure \ref{simp-ex}. Applying the function $ann$ to this program produces the annotated version in Figure \ref{simp-ex-ann}, where each statement now has an empty stack. 

As mentioned previously, annotation does not handle state-saving with this now implemented within the operational semantics. Each time a statement execution completes, a unique identifier is added to that statement's source code stack. To synchronise the use of these identifiers, the next available identifier is retrieved through the function \texttt{next()}. To avoid further data races, there is mutual exclusion on the use of this atomic function between the parallel programs. This function, typically used as \texttt{m = next()}, assigns the value of the next available identifier to \texttt{m}, while incrementing the value it will return next time by one. Identifiers will index reversal information, meaning the stacks within the auxiliary store now consist of elements of the form (\texttt{i},\texttt{v}), where \texttt{i} is an identifier and \texttt{v} is a value. The following operational semantics defines the forwards execution, where $f($\texttt{A}$)$ indicates an update of the source code stack \texttt{A}. %

{\small \begin{align*}
&\text{[Skip]} \hspace{.5cm} \frac{}{(\texttt{skip A;P},\sigma,\delta) \rightarrow (\texttt{P},\sigma,\delta)} \hspace{.8cm} \text{[Seq1]} \hspace{.3cm} \frac{(\texttt{S A},\sigma,\delta) \rightarrow (\texttt{S$'$ $f($A$)$},\sigma',\delta')}{(\texttt{S A;P},\sigma,\delta) \rightarrow (\texttt{S$'$ $f($A$)$;P},\sigma',\delta')} \\[10pt]
\begin{split} &\text{[DA1]} \hspace{.5cm} \frac{}{(\texttt{X = v A},\sigma,\delta) \rightarrow (\texttt{skip m:A},\sigma[\texttt{X $\mapsto$ v}],\delta[\texttt{(m,$\sigma$(X)) $\mapsto$ X}])} \hspace{.2cm} \text{where } \texttt{m} = \texttt{next()} \end{split} \\[10pt]
\begin{split} &\text{[CA1]} \hspace{.5cm} \frac{}{(\texttt{X cop v A},\sigma,\delta) \rightarrow (\texttt{skip m:A},\sigma[\texttt{X $\mapsto$ $\sigma$(X) op v}],\delta)} \hspace{.5cm} \text{where } \texttt{m} = \texttt{next()} \end{split} \\[10pt]
&\text{[DA2]}  \hspace{.5cm} \frac{(\texttt{e},\sigma,\delta) \rightarrow (\texttt{e$'$},\sigma',\delta')}{(\texttt{X = e A},\sigma,\delta) \rightarrow (\texttt{X = e$'$ A},\sigma',\delta')} \hspace{.8cm} \text{[CA2]}  \hspace{.3cm} \frac{(\texttt{e},\sigma,\delta) \rightarrow (\texttt{e$'$},\sigma',\delta')}{(\texttt{X cop e A},\sigma,\delta) \rightarrow (\texttt{X cop e$'$ A},\sigma',\delta')} \\[10pt]
&\text{[P1]} \hspace{.5cm} \frac{}{(\texttt{P par skip},\sigma,\delta) \rightarrow (\texttt{P},\sigma,\delta)} \hspace{.8cm}
\text{[P2]} \hspace{.3cm} \frac{}{(\texttt{skip par Q},\sigma,\delta) \rightarrow (\texttt{Q},\sigma,\delta)} \\[10pt]
&\text{[P3]} \hspace{.5cm} \frac{(\texttt{P},\sigma,\delta) \rightarrow (\texttt{P$'$},\sigma',\delta')}{(\texttt{P par Q},\sigma,\delta) \rightarrow (\texttt{P$'$ par Q},\sigma',\delta')} \hspace{.8cm}
 \text{[P4]} \hspace{.3cm} \frac{(\texttt{Q},\sigma,\delta) \rightarrow (\texttt{Q$'$},\sigma',\delta')}{(\texttt{P par Q},\sigma,\delta) \rightarrow (\texttt{P par Q$'$},\sigma',\delta')}
\end{align*} }%
Sequential composition is handled similarly to our first approach, with the exception that the source code stacks are present and potentially modified. The expressions within assignments are handled either by [DA2] or [CA2] respectively, with no identifier association due to our assumption of the atomicity of statements. When the execution of a destructive assignment completes, the rule [DA1] associates a new identifier \texttt{m} within the source code stack \texttt{A}, and uses it to index the old value $\sigma$(\texttt{X}) stored on $\delta$. Constructive assignments complete via the rule [CA1], where an identifier is associated but no reversal information is stored. Parallel composition executes as expected, with either program able to make a step of execution, until one side is complete meaning the statement becomes sequential.

After the execution of the annotated program under these semantics, the \emph{final} annotated version with populated stacks is produced. Linking again to our example with the execution interleaving (\texttt{P1;Q1;Q2}), initial state $\sigma$ with values \texttt{X=1} and \texttt{Y=1} and initial auxiliary store $\delta$, the final annotated version is shown in Figure \ref{simp-ex-ann-final}. The program state $\sigma'$ after this execution has the values \texttt{X=4} and \texttt{Y=6}, while $\delta'$ contains the necessary reversal information.
\lstset{
numbers=none, 
numberstyle=\small, 
numbersep=2pt,
language=Pascal, 
framexleftmargin=15pt}
\begin{figure}[t]
  \begin{minipage}[b]{0.49\linewidth}
   \centering
    \begin{lstlisting}[xleftmargin=1.0ex,mathescape=true]
$\texttt{X}$+=$\texttt{Y}$+$\texttt{2}$[1] $\texttt{par}$ ($\texttt{Y}$=$\texttt{X}$+$\texttt{2}$[2]; $\texttt{X}$=$\texttt{4}$[3])
\end{lstlisting}
    \caption{Final Annotated program}
	\label{simp-ex-ann-final}
  \end{minipage}
  \hspace{0.01cm}
  \begin{minipage}[b]{0.49\linewidth}
    \centering
    \begin{lstlisting}[xleftmargin=1.0ex,mathescape=true]
$\texttt{X}$-=$\texttt{Y}$+$\texttt{2}$[1] $\texttt{par}$ ($\texttt{X}$=$\texttt{4}$[3]; $\texttt{Y}$=$\texttt{X}$+$\texttt{2}$[2])
	\end{lstlisting}
    \caption{Inverted program}
    \label{simp-ex-inv}
  \end{minipage}
\end{figure}

\subsection{Inversion and Reverse Execution}
Inversion now takes the \emph{final annotated program} and produces a relatively similar inverted version. This contains all statements of the given program in its inverted program order, with the inverted version of all constructive assignments. Due to the similarity between annotated and inverted versions, we now let $ \mathbb{S}'$ be the set of both annotated and inverted statements of this approach. The function $inv: \hat{\mathbb{P}} \rightarrow \mathbb{P}^{-1}$ recursively applies the re-defined function $i: \mathbb{S}' \rightarrow \mathbb{S}'$ to each statement in reverse order. Both $inv$ and $i$ are now given, with \texttt{icop} as defined in Section \ref{sec:inv}.
\begin{alignat*}{2}
inv(\varepsilon \texttt{ A}) &= \varepsilon \texttt{ A} & \hspace{.7cm}  inv(\texttt{AS;AP}) &= inv(\texttt{AP}); i(\texttt{AS})  \\
i(\texttt{skip A}) &= \texttt{skip A}  & \hspace{.7cm} i(\texttt{X = e A}) &= \texttt{X = e A} \\
i(\texttt{X cop e A}) &= \texttt{X icop e A} & \hspace{.7cm} i(\texttt{P par Q}) &= \texttt{(}inv(\texttt{P})\texttt{) par (} inv(\texttt{Q})\texttt{)} 
\end{alignat*}

The inverted version does not make use of the reversal information, and instead must be executed under a separate set of operational semantics for reverse execution. These semantics are responsible for all interaction with any information saved, as well as using the identifiers to direct inversion along the correct interleaving order. This is implemented using the mutually exclusive and atomic function \texttt{previous()}, related to the function \texttt{next()} such that \texttt{next()} = \texttt{previous() + 1}. The statement \texttt{m = previous()} checks that the current value of \texttt{m} matches the current value of \texttt{previous()}, as well as decrementing the value the function will return next time by 1. The statement \texttt{m == previous()} again checks that \texttt{m} is equal to \texttt{previous()}, but does not decrement the value it will return next time. This forces all steps of the evaluation to happen sequentially, reflecting our assumption of statement atomicity. The functions \texttt{previous()} and \texttt{next()} are strongly related, meaning the execution of one must update the value of the other accordingly. Here the rules for sequential and parallel composition are similar to those in Section \ref{sec:up-an-sos}, but with the transition relation $\rightsquigarrow$ replacing $\rightarrow$, hence they are omitted to save space.

{ \small \begin{align*}
\begin{split} &\text{[RDA]} \hspace{.5cm} \frac{\texttt{A = m:A$'$} \hspace{.5cm} \texttt{$\delta$(X) = (m,v):X$'$} \hspace{.5cm} \texttt{m = previous()}}{(\texttt{X = e A},\sigma,\delta) \rightsquigarrow (\texttt{skip A$'$},\sigma[\texttt{X $\mapsto$ v}],\delta[\texttt{X/X$'$}])}  \end{split} \\[7pt]
\begin{split} &\text{[RCA1]} \hspace{.5cm} \frac{\texttt{A = m:A$'$} \hspace{.5cm} \texttt{m = previous()}}{(\texttt{X cop v A},\sigma,\delta) \rightsquigarrow (\texttt{skip A$'$},\sigma[\texttt{X $\mapsto$ $\sigma$(X) op v}],\delta)}  \end{split} \\[7pt]
\begin{split} &\text{[RCA2]} \hspace{.5cm} \frac{(\texttt{e},\sigma,\delta) \rightsquigarrow (\texttt{e$'$},\sigma',\delta') \hspace{.5cm} \texttt{A = m:A$'$} \hspace{.5cm} \texttt{m == previous()}}{(\texttt{X cop e A},\sigma,\delta) \rightsquigarrow (\texttt{X cop e$'$ A},\sigma',\delta')}  \end{split} 
\end{align*} }%
Destructive assignments are handled via the single rule [RDA] as no evaluation of the expression \texttt{e} is required. A destructive assignment can be executed provided its most recent identifier matches both the current value of \texttt{previous()} and the index of the top element of its stack on $\delta$. Provided these conditions hold, the variable is restored to its previous value retrieved from its stack on $\delta$, before both of these stacks are popped. Constructive assignments require the two rules [RCA1] and [RCA2] as the expression must still be evaluated. Each step of the evaluation is executed sequentially by ensuring the identifiers match without removing them. Only when the assignment has executed will the identifiers be removed, restricting interleaving until this point. 

Applying the function $inv$ to the final annotated program in Figure \ref{simp-ex-ann-final} produces the inverted version in Figure \ref{simp-ex-inv}. Execution of this inverted version under the reverse operational semantics starting with the state $\sigma'$ with values \texttt{X=4} and \texttt{Y=6}, results in the reverse statement order of \texttt{Q2;Q1;P1}, the state $\sigma$ with values \texttt{X=1} and \texttt{Y=1} and the auxiliary store $\delta$. Therefore the execution has been successfully reversed with all variables restored to their initial values. 

\subsection{Correctness}
We now outline our correctness results for the second approach. Annotation of a program \texttt{P} assigns empty stacks to the statements of the program. During execution, these stacks are populated with identifiers. Let's denote such an \emph{update} of the stacks of $ann($\texttt{P}$)$ as $\rho(ann($\texttt{P}$))$. We now give propositions corresponding to those in Sections \ref{sec:aug} and \ref{sec:inv}, however we defer all termination parts to future work. Proposition \ref{prop-3} shows that the behaviour of the original and annotated programs are semantically equivalent with respect to the data store $\sigma$, and that the annotated program will populate both the stacks within the source code and the auxiliary store.
 
\begin{prop} \label{prop-3}
Let \textup{\texttt{P}} be a program and $ann($\textup{\texttt{P}}$)$ = \textup{\texttt{P$'$}}. If $(\textup{\texttt{P}},\sigma,\delta) \rightarrow^* (\textup{\texttt{skip}},\sigma',\delta)$ for some $\sigma'$, then $(\textup{\texttt{P$'$}},\sigma,\delta) \rightarrow^* (\textup{\texttt{skip C}},\sigma',\delta')$ for some \textup{\texttt{C}} and $\delta'$ and the computation $(\textup{\texttt{P$'$}},\sigma,\delta) \rightarrow^* (\textup{\texttt{skip C}},\sigma',\delta')$ produces an update $\rho($\textup{\texttt{P$'$}}$)$ for some $\rho$.
\end{prop}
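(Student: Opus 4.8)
The plan is to follow the structure of the proof of Proposition \ref{prop1}: induct on the length of the forward computation $(\texttt{P},\sigma,\delta) \rightarrow^* (\texttt{skip},\sigma',\delta)$, treat \texttt{P} as \texttt{S;P'} for a leading statement \texttt{S}, and reduce each step to a per-statement claim analogous to Lemma \ref{lem-2}. The length-$0$ case is vacuous exactly as in Proposition \ref{prop1}. Writing $ann(\texttt{S;P'}) = a(\texttt{S}); ann(\texttt{P'})$, I would first peel off the computation of \texttt{S}, obtaining $(\texttt{S},\sigma,\delta) \rightarrow^* (\texttt{skip},\sigma'',\delta)$ by repeated use of [Seq1] read from conclusion to premise, then run the annotated statement $a(\texttt{S})$, apply [Skip] to discard the completed head, and close with the induction hypothesis on \texttt{P'}. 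The three objects the statement asks for are threaded through the argument: the final state $\sigma'$ follows from the effect-equivalence of each statement, the enlarged auxiliary store $\delta'$ is the accumulation of the indexed old values stored by [DA1], and the update $\rho$ is the collection of fresh identifiers deposited into the source stacks. Note that termination of the annotated computation need not be argued separately: each original step corresponds to finitely many annotated steps, so finiteness of the original run yields finiteness of the annotated run.

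The assignment cases are routine and mirror Case \ref{c-da-lem1}. For \texttt{X = e} the expression-evaluation steps [DA2] act on the arithmetic semantics exactly as in the original, since they never touch a stack, an identifier or $\delta$, so the evaluations proceed in lockstep; the completing step [DA1] then yields the same $\sigma'' = \sigma[\texttt{X} \mapsto \texttt{v}]$ while additionally pushing a fresh identifier $\texttt{m} = \texttt{next()}$ onto the source stack \texttt{A} and recording $(\texttt{m},\sigma(\texttt{X}))$ on $\delta(\texttt{X})$. The constructive case \texttt{X cop e} is identical except that [CA1] leaves $\delta$ unchanged and only appends the identifier, and \texttt{skip} is the sole difference-free case. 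In each case \texttt{C} grows by one annotated entry and $\rho$ by one identifier.

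The parallel case $\texttt{S} = \texttt{P1 par P2}$, where $a(\texttt{P1 par P2}) = ann(\texttt{P1}) \texttt{ par } ann(\texttt{P2})$, is where the real work lies and is the main obstacle. Here a single statement unfolds into an arbitrary interleaving of the sub-computations of \texttt{P1} and \texttt{P2}, so it cannot be dispatched by a one-shot evaluate-then-assign argument. I would argue by a nested induction on the length of the interleaved computation, maintaining a simulation invariant: at every stage the original and annotated configurations agree on $\sigma$, the annotated components are the annotations of the residual originals, and $\delta$ together with the source stacks carries precisely the reversal information and identifiers for the steps taken so far. Each forward step of the original is either a [P3] step of \texttt{P1} or a [P4] step of \texttt{P2}; the invariant lets me replay it as the corresponding annotated [P3] or [P4] step, where ordinary evaluation steps match one-for-one while a completing assignment step additionally fires [DA1] or [CA1]. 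When a branch reduces to an annotated \texttt{skip}, the rule [P1] or [P2] collapses it, matching the original's collapse and eventually delivering a single terminal \texttt{skip C}.

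The delicate point requiring care is that the identifier counter \texttt{next()} is \emph{shared} between the two branches: I cannot run \texttt{P1} and \texttt{P2} in isolation and glue their updates, because the identifiers each branch receives depend on the global interleaving rather than on the branch alone. This is exactly why the statement asks only for the existence of \emph{some} \texttt{C}, $\delta'$ and $\rho$ — the simulation produces all three as a by-product of replaying the given interleaving, with identifiers handed out in ascending completion order. That ascending order is precisely the interleaving order that the reverse operational semantics of Section \ref{sec:update} will later consume through \texttt{previous()} to reverse the program deterministically, so establishing the invariant here is what makes the corresponding inversion result go through.
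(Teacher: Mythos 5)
Your proposal cannot be compared against a detailed proof in the paper, because the paper never actually proves Proposition \ref{prop-3}: the proposition is stated, its termination aspects are explicitly deferred to future work, and the only hint at a proof is the listing, without proof, of two auxiliary lemmas said to correspond to Lemmas \ref{lem-2} and \ref{lem-3}. Your skeleton --- induction on the length of the original computation, peeling the leading statement off \texttt{S;P$'$}, and a per-statement lemma --- is exactly the route the paper signals, and your per-statement claim is literally the paper's first listed lemma ($(\texttt{S},\sigma,\delta) \rightarrow^* (\texttt{skip},\sigma',\delta)$ implies $(\texttt{S A},\sigma,\delta) \rightarrow^* (\texttt{skip A}',\sigma',\delta')$). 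What you add, and what the paper omits entirely, is the \texttt{par} case: the nested induction with a simulation invariant, and the observation that the shared counter \texttt{next()} makes it impossible to analyse the two branches independently and glue the results, so one must replay the particular interleaving step by step. That is the only part of Proposition \ref{prop-3} that does not follow by transliterating the proof of Proposition \ref{prop1}, so your proposal supplies the real content the paper leaves out. Two points to tighten if you write this up in full: first, your invariant should say that the annotated residuals are \emph{updates} $\rho(\cdot)$ of the annotations of the original residuals, since their stacks are partially populated mid-run and an assignment may be partially evaluated; second, because the rule [Skip] discards a completed statement together with its populated stack from the configuration, the update $\rho(\texttt{P}')$ asserted by the proposition cannot be read off the final configuration $(\texttt{skip C},\sigma',\delta')$ --- it must be accumulated, as you implicitly do, as a side record of the simulation. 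Neither point is an error in your argument; both are places where the paper itself is silent or imprecise.
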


Proposition \ref{prop-4} shows that executing the inverted program under the two stores and the updated source code stacks does indeed reverse all components to their initial values, as well as using the identifiers stored in the code to direct the execution.
\begin{prop} \label{prop-4}
Let \textup{\texttt{P}} be a program and $ann($\textup{\texttt{P}}$)$ = \textup{\texttt{P$'$}}. If $(\textup{\texttt{P}},\sigma,\delta) \rightarrow^* (\textup{\texttt{skip}},\sigma',\delta)$ for some $\sigma'$, then $(inv(\rho(\textup{\texttt{P$'$}})),\sigma',\delta') \rightsquigarrow^* (\textup{\texttt{skip C}},\sigma,\delta)$ for some \textup{\texttt{C}}, $\delta'$ and $\rho$.
\end{prop}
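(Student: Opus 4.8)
The plan is to reduce Proposition \ref{prop-4} to a statement-by-statement reversal argument, as in the proof of Proposition \ref{prop2}, but with the interleaving order now recorded explicitly by the identifiers. First I would apply Proposition \ref{prop-3} to the hypothesis $(\texttt{P},\sigma,\delta) \rightarrow^* (\texttt{skip},\sigma',\delta)$ to obtain the forward computation $(\texttt{P}',\sigma,\delta) \rightarrow^* (\texttt{skip C},\sigma',\delta')$ together with the update $\rho$; this fixes the $\delta'$, $\rho$ and \texttt{C} whose existence the proposition asserts. By the assumption of statement atomicity, the source-code stacks are populated by [DA1] and [CA1] with identifiers $1,2,\ldots,n$ in exactly the order in which statements complete, so the forward computation factorises into $n$ statement completions with intermediate data stores $\sigma = \sigma_0,\sigma_1,\ldots,\sigma_n = \sigma'$ and auxiliary stores $\delta = \delta_0,\ldots,\delta_n = \delta'$, where the statement carrying identifier $i$ takes $(\sigma_{i-1},\delta_{i-1})$ to $(\sigma_i,\delta_i)$.

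The heart of the proof is a determinacy argument showing that the reverse computation is forced to undo these statements in the order $n, n-1, \ldots, 1$. Since \texttt{next()} $=$ \texttt{previous()} $+\,1$ and forward execution consumed $1,\ldots,n$, the value returned by \texttt{previous()} is initially $n$. The two completing rules, [RDA] and [RCA1], fire only when the top identifier \texttt{m} of the statement's stack equals \texttt{previous()}; as identifiers are globally unique, at most one statement is enabled at any point, namely the one bearing the current value of \texttt{previous()}. Hence $\rightsquigarrow$ is deterministic at the level of statement completions and mirrors the forward interleaving exactly in reverse. I expect this to be the main obstacle: one must argue carefully that the guard \texttt{m = previous()} in [RDA]/[RCA1] and \texttt{m == previous()} in [RCA2] not only serialise the steps of a single statement (restoring statement atomicity in reverse) but also globally enforce the unique descending order, and that the LIFO discipline of each variable's stack on $\delta$ is consistent with it, so that the top of $\delta(\texttt{X})$ always carries the matching identifier demanded by [RDA] — this holds because the forward pushes to $\delta(\texttt{X})$ occur in strictly increasing identifier order, and the intervening reversals of statements not assigning \texttt{X} leave $\delta(\texttt{X})$ untouched.

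With the order fixed, I would state and prove a single-statement lemma analogous to Lemma \ref{lem-3}: the inverse of the statement with identifier $i$, executed under $\rightsquigarrow$ in state $\sigma_i$ with \texttt{previous()} $=i$ and $\delta_i$, takes $(\sigma_i,\delta_i)$ to $(\sigma_{i-1},\delta_{i-1})$ and leaves \texttt{previous()} $=i-1$. For a destructive assignment this is immediate, since [RDA] restores \texttt{X} to the old value stored by [DA1], which is precisely $\sigma_{i-1}(\texttt{X})$, and pops both stacks. For a constructive assignment the subtlety is that [RCA2] re-evaluates the expression \texttt{e}; here I would use the syntactic restriction from Section \ref{sec:prog} that \texttt{e} does not contain the assigned variable \texttt{X}, together with the fact that $\sigma_{i-1}$ and $\sigma_i$ differ only in \texttt{X} (the statement is atomic and changes only \texttt{X}), to conclude that \texttt{e} evaluates to the same value \texttt{v} in reverse as in forward execution. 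Applying \texttt{icop} then inverts \texttt{op} arithmetically, returning $\sigma_i(\texttt{X})$ to $\sigma_{i-1}(\texttt{X})$ while leaving $\delta$ untouched. Chaining this lemma along the forced order $n,\ldots,1$ transports $(\sigma_n,\delta_n) = (\sigma',\delta')$ back to $(\sigma_0,\delta_0) = (\sigma,\delta)$, and the residual program reduces to \texttt{skip C} for the emptied stacks \texttt{C}, completing the proof.
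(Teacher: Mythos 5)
Your proposal is sound, but note that the paper never actually proves Proposition \ref{prop-4}: it only states the proposition, defers the termination part to future work, and lists the two per-statement lemmas that ``correspond to the lemmas used in Sections \ref{sec:aug} and \ref{sec:inv}'', leaving the proof itself as an outline. Your route is exactly the one that outline suggests --- invoke Proposition \ref{prop-3} to fix $\delta'$, $\rho$ and \texttt{C}, then reduce to a single-statement lemma analogous to Lemma \ref{lem-3} and chain it by induction, mirroring the structure of the proof of Proposition \ref{prop2} --- so you are not diverging from the paper; you are filling in what it omits. What you add beyond the paper's text is the piece that genuinely matters in the parallel setting and that the paper leaves entirely implicit: the determinacy argument that \texttt{previous()} together with the guards \texttt{m = previous()} in [RDA]/[RCA1] and \texttt{m == previous()} in [RCA2] forces statement completions to occur in the unique descending identifier order $n, n-1, \ldots, 1$, and that the LIFO discipline on each $\delta(\texttt{X})$ stays synchronised with that order. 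One point you gesture at but should make explicit if you write this up fully: besides showing at most one statement is enabled (safety), you need progress --- the statement carrying identifier \texttt{previous()} must actually sit at the head of its thread in $inv(\rho(\texttt{P$'$}))$. This follows because identifiers increase along program order within each thread forwards, hence decrease along each inverted thread, so each thread's head carries that thread's maximal remaining identifier and the global maximum is always exposed; without this remark the reverse execution could conceivably deadlock. Your treatment of the constructive-assignment case (re-evaluation of \texttt{e} is safe because \texttt{e} cannot mention \texttt{X} and the stores differ only at \texttt{X}) is correct and is a detail the paper nowhere records.
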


At least two additional lemmas are used throughout the proofs of the two propositions above. These correspond to the lemmas used in Sections \ref{sec:aug} and \ref{sec:inv}, and are listed below. 
\begin{lemma}
Let \textup{\texttt{S}} be a program statement and $ann($\textup{\texttt{S}}$)$ = \textup{\texttt{S A}} for some \textup{\texttt{A}}. If $(\textup{\texttt{S}},\sigma,\delta) \rightarrow^* (\textup{\texttt{skip}},\sigma',\delta)$ for some $\sigma'$, then $(\textup{\texttt{S A}},\sigma,\delta) \rightarrow^* (\textup{\texttt{skip A$'$}},\sigma',\delta')$ for some \textup{\texttt{A$'$}} and $\delta'$.
\end{lemma}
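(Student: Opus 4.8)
The plan is to follow the proof of Lemma \ref{lem-2} closely, proceeding by a case analysis on the form of the statement \texttt{S}. First I would note that the hypothesis $ann(\texttt{S}) = \texttt{S A}$ --- an original statement followed by a single trailing stack \texttt{A} --- is met only by the three atomic statements \texttt{skip}, \texttt{X = e} and \texttt{X cop e}. For \texttt{S} $=$ \texttt{P par Q} the definition of $ann$ gives $ann(\texttt{P}) \texttt{ par } ann(\texttt{Q})$, which is not of this shape, so that statement falls outside the scope of the lemma; it is instead discharged within Proposition \ref{prop-3} by induction, using the parallel-composition rules [P1]--[P4]. Hence only the three atomic cases remain to be checked.

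For \texttt{skip} the original derivation is empty, so $\sigma' = \sigma$, and the annotated term \texttt{skip A} admits no forward step in isolation; the conclusion then holds in zero steps with \texttt{A$'$} $=$ \texttt{A} and $\delta' = \delta$. For the destructive assignment \texttt{X = e} I would reuse the exact shape of Lemma \ref{lem-2}, Case \ref{c-da-lem1}. From the original run $(\texttt{X = e},\sigma,\delta) \rightarrow^* (\texttt{X = v},\sigma,\delta) \rightarrow (\texttt{skip},\sigma[\texttt{X $\mapsto$ v}],\delta)$ I read off that \texttt{e} evaluates to some value \texttt{v} in $l$ steps without disturbing $\sigma$ or $\delta$. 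Because the forward semantics of Section \ref{sec:up-an-sos} reuse the arithmetic evaluation rules of Section \ref{sec:sos}, rule [DA2] replays these same $l$ steps on the annotated term, reaching $(\texttt{X = v A},\sigma,\delta)$, after which rule [DA1] fires once to give $(\texttt{skip m:A},\sigma[\texttt{X $\mapsto$ v}],\delta[\texttt{(m,$\sigma$(X)) $\mapsto$ X}])$ with $\texttt{m} = \texttt{next()}$. This settles the case with \texttt{A$'$} $=$ \texttt{m:A}, $\delta' = \delta[\texttt{(m,$\sigma$(X)) $\mapsto$ X}]$, and the final data store agreeing with the original. The constructive case \texttt{X cop e} is structurally identical but uses [CA2] then [CA1], storing no reversal information and hence leaving $\delta' = \delta$.

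The routine ingredient is the expression-evaluation phase, which I would justify once and appeal to in both assignment cases: since neither the arithmetic rules nor [DA2]/[CA2] read or modify the auxiliary store, and since we reason about a single statement in isolation (so the atomicity assumption excludes any interleaving), \texttt{e} evaluates to \texttt{v} through the identical sequence of configurations in both semantics, leaving $\sigma$ and $\delta$ fixed. The one genuinely new point relative to Lemma \ref{lem-2} is the identifier machinery invoked by [DA1] and [CA1]: each fires \texttt{next()} to obtain a fresh \texttt{m}, pushes it onto the source-code stack, and --- for a destructive assignment --- indexes the saved old value by \texttt{m}. I expect this to be the only delicate point, and it is mild, since the lemma claims merely the existence of \texttt{A$'$} and $\delta'$: I need not pin down the concrete value returned by \texttt{next()}, only that some identifier is produced and that the resulting $\delta'$ is well defined. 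With the three cases dispatched, the lemma follows.
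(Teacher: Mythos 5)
Your proposal is correct and takes essentially the approach the paper intends: the paper states this lemma without proof, saying only that it ``corresponds to'' the lemmas of Sections \ref{sec:aug} and \ref{sec:inv}, and your case analysis --- replaying the expression evaluation unchanged (since neither the arithmetic rules nor [DA2]/[CA2] touch $\delta$), then closing with one application of [DA1] or [CA1] to get \texttt{A$'$} $=$ \texttt{m:A} and, for destructive assignments, $\delta' = \delta[\texttt{(m,$\sigma$(X))} \mapsto \texttt{X}]$ --- is precisely the analogue of the paper's Lemma \ref{lem-2}, Case \ref{c-da-lem1}. Your observation that the hypothesis $ann(\texttt{S}) = \texttt{S A}$ excludes \texttt{P par Q} (whose annotation is \texttt{$ann($P$)$ par $ann($Q$)$}, so that case must instead be absorbed into the inductive proof of Proposition \ref{prop-3} via [P1]--[P4]) is a sensible resolution of a scoping ambiguity that the paper itself leaves open.
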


\begin{lemma}
Let \textup{\texttt{S}} be a program statement and $ann($\textup{\texttt{S}}$)$ = \textup{\texttt{S A}} for some \textup{\texttt{A}}. If $(\textup{\texttt{S}},\sigma,\delta) \rightarrow^* (\textup{\texttt{skip}},\sigma',\delta)$ for some $\sigma'$, then $(inv(\textup{\texttt{S A$'$}}),\sigma',\delta') \rightsquigarrow^* (\textup{\texttt{skip C}},\sigma,\delta)$ for some \textup{\texttt{A$'$}}, $\delta'$ and \textup{\texttt{C}}.
\end{lemma}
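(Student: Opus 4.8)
The plan is to proceed by case analysis on the shape of the statement \texttt{S}, exactly as in the proof of Lemma \ref{lem-3}, disposing of \texttt{skip}, \texttt{X = e} and \texttt{X cop e} by direct calculation and reserving the real work for \texttt{P par Q}. For the sequential cases I would first run the annotated statement forwards under the new semantics to record precisely what is stored: annotation attaches a stack \texttt{A}, and the forward rules ([DA1] for a destructive assignment, [CA1] for a constructive one) complete the statement by drawing a fresh identifier \texttt{m = next()}, pushing it onto \texttt{A} to yield the updated stack \texttt{A$'$} with head \texttt{m}, and --- only in the destructive case --- pushing the pair \texttt{(m,$\sigma$(X))} onto $\delta(\texttt{X})$. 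I would then apply the matching reverse rule to $i(\texttt{S A$'$})$. For \texttt{i(X = e A$'$) = X = e A$'$} the rule [RDA] checks that the head \texttt{m} of \texttt{A$'$} equals \texttt{previous()} and indexes the top of $\delta(\texttt{X})$, restores \texttt{X} to the stored old value, and pops both stacks, returning $\sigma$ and $\delta$. For \texttt{i(X cop e A$'$) = X icop e A$'$} the rules [RCA2] and [RCA1] re-evaluate \texttt{e} sequentially (checking \texttt{m == previous()} without consuming it) and then apply the inverse operator \texttt{icop}, which cancels the original update since no reversal information was needed; the \texttt{skip} case is immediate. These mirror Lemma \ref{lem-3}, the only change being that the push/pop bookkeeping now lives in the semantics rather than in inserted statements.

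The substantive case is \texttt{S = P par Q}, where $ann(\texttt{S}) = ann(\texttt{P})\texttt{ par }ann(\texttt{Q})$ and $i(\texttt{S}) = (inv(\texttt{P}))\texttt{ par }(inv(\texttt{Q}))$. Here I would first make the forward/reverse correspondence explicit. By the parallel rules [P1]--[P4], any terminating run $(\texttt{P par Q},\sigma,\delta) \rightarrow^* (\texttt{skip},\sigma',\delta)$ of the original statement fixes an interleaving, i.e.\ a totally ordered sequence of atomic statement completions, and the annotated run of $ann(\texttt{P})\texttt{ par }ann(\texttt{Q})$ following the same interleaving assigns the identifiers in ascending order, since \texttt{next()} is atomic and mutually exclusive. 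The crucial observation is that the reverse semantics makes this order rigid in reverse: because \texttt{previous()} starts at the largest issued identifier and is only decremented when [RDA] or [RCA1] fires on the statement whose head identifier currently matches \texttt{previous()}, the reverse run of $(inv(\texttt{P}))\texttt{ par }(inv(\texttt{Q}))$ is forced to undo the completions in strictly descending identifier order --- exactly the reverse of the forward interleaving.

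With that pinned down I would prove the parallel case by a well-founded induction whose measure is the total number of atomic statement completions in the forward run (equivalently, the number of non-\texttt{skip} statements in \texttt{P par Q}, counted through nested \texttt{par}s). The base case is trivial. For the inductive step, the last-completed statement, carrying the largest identifier, is the unique statement the reverse semantics can fire first; by the assignment arguments above its individual effect on $\sigma$ and $\delta$ is undone, and what remains is the reverse execution of the inverted interleaving of length one less, to which the induction hypothesis applies. Peeling identifiers off from the top in this way cancels the cumulative forward effect one step at a time and returns both stores to $\sigma$ and $\delta$, the final configuration being \texttt{skip C} for the residual stack content \texttt{C}.

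The main obstacle I anticipate is not any single calculation but the bookkeeping needed to make the interleaving correspondence rigorous: formally relating an original execution, its annotated counterpart and the enforced reverse execution, and arguing that the \texttt{next()}/\texttt{previous()} discipline genuinely renders the reverse interleaving unique. There is also a mild circularity to manage, since inverting the subprograms \texttt{P} and \texttt{Q} inside \texttt{P par Q} really requires the program-level statement of Proposition \ref{prop-4} rather than this statement-level lemma. I would resolve this by organising the development as a single induction on program size in which this lemma and Proposition \ref{prop-4} are established simultaneously, so that the parallel case may legitimately appeal to the proposition for the strictly smaller constituents \texttt{P} and \texttt{Q}.
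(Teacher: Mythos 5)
The paper never actually proves this lemma: it appears in the correctness subsection of Section~\ref{sec:update} stated bare, with only the remark that it ``corresponds to'' the lemmas of the sequential development (Lemmas~\ref{lem-2} and~\ref{lem-3}), and Propositions~\ref{prop-3} and~\ref{prop-4} that it serves are likewise left as outlines. So there is no official proof to compare yours against; what can be said is that your proposal matches the strategy the paper gestures at and then supplies exactly the material the paper omits. Your sequential cases are the direct analogue of Lemma~\ref{lem-3}, Case~1, with the push/pop bookkeeping relocated from inserted statements into the rules [DA1]/[RDA] and [CA1]/[RCA1]--[RCA2], and they are correct (including the point, worth stating explicitly, that \texttt{e} re-evaluates to the same value in reverse because the language forbids \texttt{X} from occurring in its own assignment's expression). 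The \texttt{par} case is the genuinely new content: the observation that the \texttt{next()}/\texttt{previous()} discipline forces the reverse run to consume identifiers in strictly descending order, hence along exactly the inverse interleaving, is the heart of the matter --- the paper asserts this informally but nowhere argues it --- and your induction on the number of atomic completions is a reasonable way to exploit it. You are also right, and the paper is silent on this, that the lemma (statement-level) and Proposition~\ref{prop-4} (program-level) are mutually dependent through \texttt{par}; in the restricted loop-free language your simultaneous induction on syntactic size is well-founded and resolves the circularity cleanly. The one step needing care is the inductive step for \texttt{par}: after the reverse semantics undoes the statement carrying the largest identifier, the residual configuration is not literally $inv$ of an annotated program, so the induction hypothesis must be phrased over configurations (equivalently, over the forward run with its last completion removed) rather than over programs --- this is precisely the bookkeeping obstacle you anticipate, and making it rigorous is the bulk of the remaining work.
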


\section{Conclusion}
We have presented an approach to reversing an imperative programming language, using the state-saving notion. We have defined two functions, namely \emph{aug} and \emph{inv}, capable of producing the augmented version and inverted version of an originally irreversible program, respectively. We have proved that our augmentation does not alter the behaviour of the program with respect to the data store, and that it saves the necessary information to revert the program state after execution to that of before. The auxiliary store used to save this reversal information is also proved to revert to its initial state, ensuring no extra garbage data is produced.
  
We also described a modification to our first approach to include parallelism within a restricted language, while avoiding a number of issues parallelism introduces. We defined a function \emph{ann} and redefined \emph{inv} to support the recording of the interleaving order into the source code. Two sets of operational semantics are defined, one performing the state-saving for forwards execution, and another performing the inversion for reverse execution. Finally, we propose the correctness results for this modified approach. 

In the future, we shall relax the language restriction and support both conditional statements and while loops alongside parallel statements. The assumption of statement atomicity will be removed when considering a richer language which supports locks and mutual exclusion, allowing the approach described here to be implemented within the language itself. We will continue to extend the approach towards the complexity of C.

\subsubsection*{Acknowledgements}
We are grateful to the referees for their detailed and helpful comments and suggestions. The authors acknowledge partial support of COST Action IC1405 on Reversible Computation - extending horizons of computing. The second author acknowledges the support by the University of Leicester in granting him Academic Study Leave, and thanks Nagoya University for support during the study leave. The third author acknowledges the support by JSPS KAKENHI grants JP17H0722 and JP17K19969.

\bibliographystyle{eptcs}

\end{document}